\newcommand\version{October 8, 2014}
\newtheorem{theorem}{Theorem}[section]
\newtheorem{proposition}[theorem]{Proposition}
\newtheorem{lemma}[theorem]{Lemma}
\newtheorem{assumption}[theorem]{Assumption}
\theoremstyle{definition}
\theoremstyle{remark}
\newtheorem{remark}[theorem]{Remark}
\newtheorem*{remark*}{Remark}
\newtheorem*{remarks*}{Remarks}
\numberwithin{equation}{section}
\newcommand{\C}{\mathbb{C}}
\newcommand{\calC}{\mathcal{C}}
\newcommand{\E}{\mathcal{E}}
\renewcommand{\epsilon}{\varepsilon}
\newcommand{\F}{\mathcal{F}}
\renewcommand{\H}{\mathcal{H}}
\newcommand{\loc}{{\rm loc}}
\newcommand{\per}{\mathrm{per}}
\newcommand{\R}{\mathbb{R}}
\newcommand{\symm}{\mathrm{symm}\ }
\newcommand{\Z}{\mathbb{Z}}
\DeclareMathOperator{\im}{Im}
\DeclareMathOperator{\re}{Re}
\DeclareMathOperator{\spec}{spec}
\DeclareMathOperator{\Tr}{Tr}
\DeclareMathOperator{\tr}{Tr}
\DeclareMathOperator{\Trs}{Tr_0}
\def\blfootnote{\xdef\@thefnmark{}\@footnotetext}
\begin{document}

\title[BCS critical temperature --- \version]{The external field dependence of the BCS critical temperature}

\author[R. Frank]{Rupert L. Frank}
\address{{\rm (Rupert L. Frank)}, Mathematics 253-37, Caltech, Pasadena, CA 91125, USA}
\email{rlfrank@caltech.edu}

\author[C. Hainzl]{Christian Hainzl}
\address{{\rm (Christian Hainzl)} Mathematisches Institut, Universit\"at T\"ubingen, Auf der Morgenstelle 10, 72076 T\"ubingen, Germany}
\email{christian.hainzl@uni-tuebingen.de}

\author[R. Seiringer]{Robert Seiringer}
\address{{\rm (Robert Seiringer)} Institute of Science and Technology Austria (IST Austria), Am Campus 1, 3400 Klosterneuburg, Austria}
\email{robert.seiringer@ist.ac.at}

\author[J.P. Solovej]{Jan Philip Solovej} 
\address{{\rm (Jan Philip Solovej)} Department of Mathematics, University of Copenhagen, Universitetsparken 5, DK-2100 Copenhagen, Denmark}
\email{solovej@math.ku.dk}

\begin{abstract}
We consider the Bardeen--Cooper--Schrieffer free energy functional for particles interacting via a two-body potential on a microscopic scale and in the presence of weak external fields varying on a macroscopic scale. We study the influence of the external fields on the critical temperature. We show that in the limit where the ratio between the microscopic and macroscopic scale tends to zero, the next to leading order of the critical temperature is determined by the lowest eigenvalue of the linearization of the Ginzburg--Landau equation.  
\end{abstract}

%\subjclass[2000]{Primary 35P15, Secondary 35J10, 47F05}

\maketitle

\blfootnote{\copyright\, 2014 by the authors. This paper may be reproduced, in its entirety, for non-commercial purposes.}

%%%%%%%%%%%%%%%%%%%%%%%%%%%%%%%%

\section{Introduction}

In 1950 Ginzburg and Landau \cite{GL} gave an explanation of the phenomenon of superconductivity. Their model is phenomenological and \emph{macroscopic}, describing superconductivity in terms of an order parameter, which is a complex-valued function of a single position variable. In 1957 Bardeen, Cooper and Schrieffer \cite{BCS} introduced a \emph{microscopic} theory of superconductivity based on a pairing mechanism of the underlying quantum-mechanical particles.  Close to a certain critical temperature, the macroscopic Ginzburg--Landau (GL) theory is expected to be a good approximation to the microscopic Bardeen--Cooper--Schrieffer (BCS) theory. The validity of this approximation was discussed by Gor'kov \cite{G} and, later, by de Gennes \cite{dG} and Eilenberger \cite{E}. In our previous work \cite{FHSS} (see also \cite{FHSS1a,FHSS2}) we identified a precise parameter regime where this approximation is valid and we gave the first mathematical derivation of GL theory from BCS theory with quantitative error bounds. In this paper we continue our investigation and discuss the critical temperature in the BCS model.

To be more precise, we consider a macroscopic sample of a fermionic system of particles interacting via a two body potential in the presence of 
weak external magnetic and electric fields. We make the realistic assumption that the external fields vary only on the macroscopic scale, say the size of our metal, or box of gas. The particles, however, interact on the microscopic scale. The ratio between the microscopic and the macroscopic scales will be denoted by the small parameter $h$. Our main result in \cite{FHSS} about the connection between BCS and GL theory says that in the limit of small $h$ the BCS free energy functional separates into two parts, namely, a translation invariant BCS functional describing the microscopic structure and a GL functional involving the macroscopic objects. In particular, if we normalize scales so that the macroscopic scale is of order one (and therefore the microscopic scale is of order $h$), the BCS-minimizing Cooper-pair wave function $\alpha$ is to leading order of the form
\begin{equation}
\label{eq:cooperpair}
\alpha(x,y) \approx h^{1-d} \, \alpha_*\left(\frac{x-y}{h}\right)\, \psi\left(\frac{x+y}2\right) \,,
\end{equation}
provided the temperature $T$ is such that $(T_c-T)/T_c$ is of order $h^2$. Here, $T_c$ is the critical temperature of the translation invariant BCS system without the external fields and $\alpha_*$ is a universal function defined in terms of this system. Most importantly, $\psi$ in \eqref{eq:cooperpair} is a GL-minimizer. Thus, translation invariant BCS theory describes the relative coordinate of the Cooper pair wave function and GL theory the center of mass coordinate. The critical temperature $T_c$ in translation invariant BCS theory has been studied in detail in \cite{HHSS, FHNS, HS1, HS2}.

In this paper we investigate the critical temperature of the full BCS functional including (weak) external fields. More precisely, we define two critical temperatures $\overline{T_c(h)}$ and $\underline{T_c(h)}$ such that for all temperatures below $\underline{T_c(h)}$ one has superconductivity and for no temperatures above $\overline{T_c(h)}$ one has superconductivity. Clearly, $\underline{T_c(h)}\leq \overline{T_c(h)}$, but in general we do not know whether this inequality is an equality. (A strict inequality would correspond to a range of temperatures, where superconductivity disappears and then reappears as the temperature is increased, which, in principle, is a conceivable possibility.) Our task here will be to compute the deviation of $\overline{T_c(h)}$ and $\underline{T_c(h)}$ from $T_c$ in the limit of small $h$.

Our analysis in \cite{FHSS} identifies one of the coefficients entering the GL functional to be proportional to $$D = \frac{T - T_c}{h^2 T_c} \,.$$
The main result of the present paper (Theorem \ref{main}) is that
\begin{equation}\label{crittemp}
\overline{T_c(h)} = T_c(1 - D_c h^2) + o(h^2)\,,
\qquad
\underline{T_c(h)} = T_c(1- D_c h^2) + o(h^2) 
\end{equation}
as $h\to 0$, 
where  the parameter $D_c$ is determined as the critical value of the parameter $D$ for which the GL functional has a non-trivial minimizer. Note that this implies, in particular, that $\overline{T_c(h)}-\underline{T_c(h)}= o(h^2)$, hence the possibility of disappearance and reappearance of superconductivity in BCS theory mentioned above is a higher order effect that cannot be understood in terms of GL theory.

We note that the appearance and characterization of $D_c$ is somewhat analogous to that of $T_c$ in the translation-invariant case. In fact, as shown in \cite{HHSS} (see also Proposition \ref{crittemplimit} below), the critical value $T_c$ can be characterized by the fact that a certain linear operator depending on $T$ has $0$ as its lowest eigenvalue. The linear operator in question is the linearization of the translation invariant BCS functional around the normal state. Similarly, $D_c$ can be characterized by the fact that the linearization of the GL functional around zero has $0$ as lowest eigenvalue (see Lemma~\ref{dc}).

%%%%%%%%%%%%%%%%%%%%%%%%%%%%%%%%%%%%%%%%%%%%%%%%%%%

\section{Description of the model and main result}

Throughout the following we assume that $d\in \{1,2,3\}$. The configuration space of the system is
$$
\mathcal C = [0,1]^d = (\R/\Z)^d \,,
$$
where by the last equality we mean that we identify opposite sides of $[0,1]^d$ and that $\mathcal C$ does not have a boundary. Periodicity will always mean periodicity with period one.

\subsection{The BCS model}

Consider a system of fermionic particles with two-body interactions. These particles could be electrons in a solid, or atoms in a cold gas. The interactions are either local or effective non-local arising from other degrees of freedom like from phonons as in the original BCS paper \cite{BCS}. Here for definiteness we stick to the local potential but the result can easily be translated to the nonlocal case. For cold atomic gases consisting of neutral 
particles the notion of superconductivity has to be replaced by superfluidity. By analogy we still refer to the external fields as magnetic or electric; such effective fields can, indeed, be artificially created in a lab. 

The BCS functional depends on both macroscopic and microscopic parameters. The microscopic parameters are the interaction potential $V:\R^d\to\R$, the chemical potential $\mu\in\R$ and the temperature $T=\beta^{-1}\geq 0$. The macroscopic parameters are the external electric potential $W:\R^d\to\R$ and the external magnetic potential $A:\R^d\to\R^d$. Finally, there is a parameter $h>0$ which describes the ratio between the microscopic and the macroscopic scale and which will tend to zero in our study.

The following are our precise assumptions concerning the microscopic and macroscopic potentials.

\begin{assumption}\label{ass1}
We assume that $V$ is reflection-symmetric (i.e., $V(x)=V(-x)$ for all $x\in\R^d$) and belongs to $L^p(\R^d)$, where $p=1$ for $d=1$, $p>1$ for $d=2$ and $p=3/2$ for $d=3$.\\
We assume that $W$ and $A$ are periodic and that their Fourier coefficients satisfy $\sum_{p\in(2\pi\Z)^d} \left( |\widehat{W}(p)| + (1+|p|)|\widehat{A}(p)|\right)<\infty$. 
\end{assumption}

We say that an operator $\Gamma$ on $L^2(\R^d)\oplus L^2(\R^d)$ is an \emph{admissible} BCS state if it is periodic, i.e, it commutes with translations by $1$ in all $d$ coordinate directions, satisfies $0\leq\Gamma\leq 1$,
\begin{equation}
\label{eq:admissible}
U \Gamma U^\dagger = 1 - \overline{ \Gamma}
\quad \text{with}\quad  U = \left( \begin{array}{cc} 0 & 1 \\ -1 & 0 \end{array} \right)
\end{equation}
and its entry $\gamma=\Gamma_{11}$ satisfies $\tr(-\Delta+1)\gamma<\infty$. In \eqref{eq:admissible}, $\overline\Gamma = C \Gamma C$, where $C$ denotes complex conjugation, that is, in terms of integral kernels, $\overline\Gamma(x,y) = \overline{\Gamma(x,y)}$ for all $x,y\in\R^d$. We will usually write $\Gamma$ as a $2\times2$ operator-valued matrix,
\begin{equation}
\label{eq:gamma}
\Gamma = \begin{pmatrix}
\gamma & \alpha \\ \alpha^* & 1- \tilde\gamma
\end{pmatrix} \,,
\end{equation}
and then admissibility implies that $\alpha$ and $\gamma$ are periodic, satisfy $0\leq\gamma\leq 1$ and $\alpha^* = \overline\alpha$ (that is, in terms of integral kernels, $\alpha(x,y)=\alpha(y,x)$ for all $x,y\in\R^d$) and $\tilde\gamma=\overline{\gamma}$. (We note that we do not include spin variables here. The full, spin-dependent Cooper-pair wave function is the product of $\alpha$ with an anti-symmetric spin singlet. Since $\alpha$ is symmetric, the full, spin-dependent pair wave function is thus anti-symmetric, as appropriate for  fermions.)

Finally, the BCS functional for the free energy is defined by
\begin{align}
\label{eq:bcs}
\mathcal F_{T,h}(\Gamma) = \tr \mathfrak h_h \gamma - T S(\Gamma) + \iint_{\mathcal C\times\R^d} V(h^{-1}(x-y))|\alpha(x,y)|^2\,dx\,dy
\end{align}
for admissible states $\Gamma$ of the form \eqref{eq:gamma}. Here 
\begin{equation}
\label{eq:hh}
\mathfrak h_h = (-ih\nabla+hA)^2 +h^2 W -\mu 
\end{equation}
is the one-particle Hamiltonian\footnote{This operator is denoted by $k$ in \cite{FHSS}.} (which is a self-adjoint operator in $L^2(\R^d)$) and
$$
S(\Gamma) = -\tr\Gamma\ln\Gamma
$$
denotes the entropy of $\Gamma$, where $\tr$ denotes the trace per unit volume, defined in Subsection \ref{sec:def}. Usually, the dependence on $h$ is understood and we suppress it in the notation, abbreviating $\mathcal F_T(\Gamma)=\mathcal F_{T,h}(\Gamma)$ and $\mathfrak h=\mathfrak h_h$.

In this paper we are concerned with the minimization problem
$$
\inf\left\{ \mathcal F_{T,h}(\Gamma):\ \Gamma \ \text{admissible}\right\}
$$
and, in particular, whether this infimum is realized for $\Gamma$ with $\alpha\equiv 0$ (normal state) or with $\alpha\not\equiv 0$ (superconducting state). We study this question in dependence of the temperature $T$ in the limit where $h\to 0$. We observe that, if $\alpha\equiv 0$, then
$$
\mathcal F_{T,h}\left( \begin{pmatrix}
\gamma & 0 \\ 0 & 1-\overline\gamma
\end{pmatrix} \right) = \tr \mathfrak h_h\gamma + T \tr\left(\gamma\ln\gamma + \left(1-\gamma\right)\ln\left(1-\gamma\right)\right) \,,
$$
and it is well known that
$$
\mathcal F_{T,h}\left( \begin{pmatrix}
\gamma & 0 \\ 0 & 1-\overline\gamma
\end{pmatrix} \right)
\geq -T \tr \ln\left( 1+ e^{- \mathfrak h_h /T} \right)
= F_{T,h}^{(0)}
$$
with equality if and only if $\gamma = \frac{1}{1+ e^{\beta \mathfrak h_h}}$. Thus, the \emph{normal state} is
$$
\Gamma_0 = \begin{pmatrix}
\frac{1}{1+ e^{\beta \mathfrak h_h}} & 0 \\
0 & \frac{1}{1+ e^{-\beta \overline{\mathfrak h_h}}}
\end{pmatrix}
$$
and its free energy is $F_{T,h}^{(0)}$ as defined above. Note also that
\begin{equation}
\label{eq:hhbar}
\overline{\mathfrak h_h} = (-ih\nabla-hA)^2 +h^2 W -\mu\,.
\end{equation}

The question formulated above leads naturally to the following two definitions of a \emph{critical temperature} in the BCS model,
$$
\overline{T_c(h)} = \inf\{ T> 0:\ \mathcal F_{T',h}(\Gamma) > F_{T',h}^{(0)} \ \text{for all}\ T'> T \ \text{and all}\ \Gamma\neq\Gamma_0 \}
$$
and
$$
\underline{T_c(h)} = \sup\{ T> 0:\ \inf_{\Gamma} \mathcal F_{T',h}(\Gamma) < F_{T',h}^{(0)} \ \text{for all}\ T'< T \} \,.
$$
On other words, $\overline{T_c(h)}$ is the smallest temperature above which only the normal state minimizes the free energy and $\underline{T_c(h)}$ is the largest temperature below which a superconducting state has a lower free energy than the normal state. Clearly, $\underline{T_c(h)}\leq \overline{T_c(h)}$, but in general we do not know whether this inequality is an equality. A priori it is not even clear that $\overline{T_c(h)}$ is finite, but this is a consequence of the following proposition. More importantly, it says that as $h\to 0$, $\underline{T_c(h)}$ and $\overline{T_c(h)}$ both converge to the same number, for which there is an explicit characterization. In particular, if there is a discrepancy between $\underline{T_c(h)}$ and $\overline{T_c(h)}$, then it vanishes as $h\to 0$.

To state this result, we need to introduce for $T>0$ the function
$$
K_T (p) = \frac{p^2-\mu}{\tanh\left(\frac{p^2-\mu}{2T}\right)} \,,
\qquad p\in\R^d \,.
$$
Moreover, for $T=0$, $K_0(p)= |p^2-\mu|$. As usual, this defines an operator $K_T(-i\nabla)$ in $L^2(\R^d)$ which acts as multiplication operator by $K_T$ in Fourier space. Since the function $K_T$ is real and reflection-symmetric, the operator $K_T(-i\nabla)$ leaves the subspace $L^2_\symm(\R^d)$ of reflection-symmetric functions invariant. Since $V$ is reflection-symmetric by Assumption \ref{ass1}, the same is true for the operator $K_T(-i\nabla)+V(x)$.

\begin{proposition}\label{crittemplimit}
Under Assumption \ref{ass1} one has
$$
T_c = \lim_{h\to 0} \underline{T_c(h)} = \lim_{h\to 0} \overline{T_c(h)} \,,
$$
where the number $T_c \geq 0$ is uniquely characterized by the fact that
$$
\inf\spec_{L^2_\symm(\R^d)} \left( K_T(-i\nabla) + V(x) \right) < 0
\qquad\text{for all}\ 0 \leq T <T_c  
$$
and
$$
K_{T_c}(-i\nabla) + V(x) \geq 0 
\qquad\text{on}\ L^2_\symm(\R^d) \,.
$$
\end{proposition}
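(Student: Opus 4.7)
My plan is to establish three things: (a) $T_c$ is well-defined by the stated conditions; (b) $\limsup_{h\to 0}\overline{T_c(h)}\leq T_c$; and (c) $\liminf_{h\to 0}\underline{T_c(h)}\geq T_c$. Since $\underline{T_c(h)}\leq \overline{T_c(h)}$ always holds, (b) and (c) together yield the claimed common limit. For (a) I would note that a direct calculation shows that for each $p\in\R^d$, $T\mapsto K_T(p)$ is continuous and non-decreasing on $[0,\infty)$, with $K_0(p)=|p^2-\mu|$ and $K_T(p)\to\infty$ as $T\to\infty$; consequently $T\mapsto \inf\spec_{L^2_\symm(\R^d)}(K_T(-i\nabla)+V(x))$ is continuous, non-decreasing, goes from a value $\leq 0$ at $T=0$ to $+\infty$, and so vanishes at exactly one point $T_c\geq 0$.

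For (b), I would fix $T>T_c$ so that $K_T(-i\nabla)+V(x)\geq \kappa$ on $L^2_\symm(\R^d)$ for some $\kappa>0$, and invoke the standard relative-entropy bound for BCS-type functionals (used in \cite{HHSS, FHSS}): for every admissible $\Gamma$,
\begin{equation*}
\mathcal F_{T,h}(\Gamma)-F_{T,h}^{(0)}\;\geq\; \tr\bigl(\alpha^* (L_{T,h}+V_h)\alpha\bigr) + \mathcal R(\Gamma),
\end{equation*}
where $V_h$ is multiplication by $V(h^{-1}(x-y))$, $L_{T,h}$ is the operator obtained by functional calculus on $\mathfrak h_h\otimes 1+1\otimes \overline{\mathfrak h_h}$ via the function $u/\tanh(\beta u/2)$, and $\mathcal R(\Gamma)\geq 0$ vanishes only at $\Gamma=\Gamma_0$. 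After the semiclassical change of variables $r=h^{-1}(x-y)$, $s=(x+y)/2$, the operator $L_{T,h}+V_h$ should reduce to $K_T(-i\nabla_r)+V(r)$ on the relative coordinate, with $A,W$-corrections of orders $h, h^2$. The spectral gap $\kappa$ would then persist for small $h$, forcing $\alpha=0$ and subsequently $\Gamma=\Gamma_0$.

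For (c), I would let $\alpha_*\in L^2_\symm(\R^d)$ be a ground state of $K_T(-i\nabla)+V(x)$ with eigenvalue $-\lambda<0$ (for $T<T_c$), and employ the Bogoliubov trial state $\Gamma=(1+e^{\beta H_\Delta})^{-1}$, where $H_\Delta=\bigl(\begin{smallmatrix}\mathfrak h_h & \Delta\\ \Delta^* & -\overline{\mathfrak h_h}\end{smallmatrix}\bigr)$ with gap parameter $\Delta(x,y)=\epsilon\,h^{1-d}\alpha_*\bigl(h^{-1}(x-y)\bigr)$; such $\Gamma$ is automatically admissible. Expanding in $\epsilon$, the $\epsilon^2$ contribution to $\mathcal F_{T,h}(\Gamma)-F_{T,h}^{(0)}$ should converge as $h\to 0$ to $-\lambda\epsilon^2\|\alpha_*\|^2+o(1)$ (the $o(1)$ coming from the small $A,W$-corrections), while the $\epsilon^4$ remainder stays bounded uniformly in $h$. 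Choosing $\epsilon$ small but $h$-independent would give $\mathcal F_{T,h}(\Gamma)<F_{T,h}^{(0)}$ for $h$ small, hence $\underline{T_c(h)}\geq T$.

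The hardest part will be the semiclassical reduction in (b): one must show that $L_{T,h}+V_h$, after rescaling of the relative coordinate, is positive uniformly in $h$ whenever $T>T_c$, despite the presence of the macroscopically varying external fields $A,W$ (which enter $\mathfrak h_h$ at orders $h$ and $h^2$ respectively). This is precisely the linearized version of the semiclassical analysis carried out for the full BCS problem in \cite{FHSS}, and I expect the quadratic-form and resolvent estimates developed there to transfer essentially verbatim.
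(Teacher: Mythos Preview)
Your overall plan --- $\limsup\overline{T_c(h)}\leq T_c$ via a Klein/relative-entropy lower bound, and $\liminf\underline{T_c(h)}\geq T_c$ via a trial state --- is exactly the paper's (Propositions~\ref{aprioriupper} and~\ref{apriorilower}). Part (a) is fine.

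In (b), one technical correction: the operator Klein's inequality produces is not a two-body operator. Since $H_0$ is block-diagonal, $H_0/\tanh(\tfrac\beta2 H_0)$ has diagonal blocks $K_T^{A,W}:=\mathfrak h_h/\tanh(\tfrac\beta2\mathfrak h_h)$ and its conjugate, and the off-diagonal part of $(\Gamma-\Gamma_0)^2$ yields $\Tr\overline\alpha\,K_T^{A,W}\alpha$ with the \emph{one}-body operator $K_T^{A,W}$ acting on the first variable of $\alpha(x,y)$. The task then reduces to positivity of $K_T^{A,W}+V(h^{-1}(\cdot-y))$ for each fixed $y$; the paper achieves this via the operator inequality $K_T^{A,W}\geq\tfrac18 K_T^{0,0}-Ch^2$ from \cite{FHSS} rather than an explicit change of variables, but your route would work as well.

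In (c) there is a genuine gap. From $\mathcal F_{T,h}(\Gamma)<F_{T,h}^{(0)}$ at a \emph{single} $T<T_c$ you cannot conclude $\underline{T_c(h)}\geq T$: by definition the latter requires the strict inequality for \emph{every} $T'<T$, with the same $h$. Your $\epsilon$ and your $h$-threshold both depend on $T$ through $\lambda(T)$ and $\alpha_*$, and you have not argued uniformity over $T'\in[0,T_0]$. The paper handles this differently: it takes as trial state the actual translation-invariant BCS minimizer $\tilde\Gamma_T$ from \cite{HHSS}, uses concavity and continuity of $T\mapsto\inf_{\tilde\Gamma}\tilde{\mathcal F}_T(\tilde\Gamma)$ on $[0,T_c]$ to extract a uniform energy gap $\delta>0$ on $[0,T_0]$, and then checks that the $A,W$-corrections are $O(h)$ uniformly in $T$. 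Your perturbative Bogoliubov state is close to the trial used for the \emph{refined} bound near $T_c$ (Remark~\ref{apriorilower2}), but even there it only covers an interval $[T_0,T_c(1-Ch^2)]$ and must be combined with the nonperturbative argument to reach down to $T=0$.
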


\begin{remarks*}\begin{enumerate}[(1)]
\item We emphasize that $T_c$ only depends on the `microscopic' parameters $V$ and $\mu$ and is independent of the `macroscopic' parameters $W$ and $A$.

\item If $W\equiv 0$ and $A\equiv 0$ and if one considers $\mathcal F_{T,h}(\Gamma)$ only for translation-invariant $\Gamma$, then this proposition is a result of \cite{HHSS}. (The restriction to reflection-symmetric functions is not present in \cite{HHSS}, but the  arguments there remain valid also in this case.) In fact, our proof of the lower bound on $\underline{T_c(h)}$ uses the results in \cite{HHSS}.

\item Since $K_T(p)$ is increasing with respect to $T$ for every fixed $p\in\R^d$, the variational principle implies that $\inf\spec \left( K_T(-i\nabla) +V(x)\right)$ is non-decreasing with respect to $T$. Moreover, it is easy to see that $\inf\spec_{L^2_\symm(\R^d)} \left( K_T(-i\nabla)+V(x)\right)\to\infty$ as $T\to\infty$. This shows that $T_c$ is uniquely determined.

\item By Assumption \ref{ass1} on $V$ the essential spectrum of $K_T(-i\nabla)+V(x)$ in $L^2_\symm(\R^d)$ is $[2T,\infty)$ if $\mu\geq 0$ and $[|\mu|/\tanh(|\mu|/2T),\infty)$ if $\mu<0$. Thus, if $T_c>0$, then the eigenvalue $0$ of $K_{T_c}(-i\nabla) + V(x)$ in $L^2_\symm(\R^d)$ has finite multiplicity and is isolated from the rest of the spectrum.
\end{enumerate}
\end{remarks*}

Proposition \ref{crittemplimit} follows from Propositions \ref{apriorilower} and \ref{aprioriupper}, which contain the proofs of the lower bound on $\underline{T_c(h)}$ and the upper bound on $\overline{T_c(h)}$, respectively, and can be found in Section \ref{sec:apriori}.

In order to proceed we will work under the following

\begin{assumption}\label{ass2}
The number $T_c$ from Proposition \ref{crittemplimit} satisfies $T_c>0$ and the zero eigenvalue of the operator $K_{T_c}(-i\nabla)+V(x)$ in $L^2_\symm(\R^d)$ is simple.
\end{assumption}

We shall denote a reflection-symmetric eigenfunction of $K_{T_c}(-i\nabla)+V(x)$ corresponding to the eigenvalue zero by $\alpha_*$.\footnote{This function is denoted by $\alpha_0$ in \cite{FHSS}, but since this conflicts with the notation $\alpha_\Delta$ of the off-diagonal entry of $H_\Delta$ for $\Delta=0$, we chose to write $\alpha_*$ here. Also our normalization of $\alpha_*$ here is different from that in \cite{FHSS}.} Clearly, $\alpha_*$ can be chosen real. Moreover, for the sake of concreteness, we assume that $\|\alpha_*\|=1$. Let
\begin{equation}
\label{eq:deft}
t_*(p) = -2(2\pi)^{-d/2} \int_{\R^d} V(x)\alpha_*(x) e^{-ip\cdot x}\,dx \,,\qquad p\in\R^d \,.
\end{equation}
We now define a matrix $\Lambda_0\in\R^{d\times d}$ and constants $\Lambda_1,\Lambda_2,\Lambda_3\in\R$ in terms of $t_*$. These constants will be important for the statement of our main result and in the definition of the Ginzburg--Landau functional. We need the functions
\begin{equation}
\label{eq:defg12}
g_1(z) = \frac{e^{2z}-2ze^z-1}{z^2(1+e^z)^2}
\qquad\text{and}\qquad
g_2(z) = \frac{2e^z (e^z-1)}{z(e^z+1)^3} \,.
\end{equation}
We also set $T_c=\beta_c^{-1}$. Then
\begin{align}
\label{eq:lambda0}
\left(\Lambda_0\right)_{ij} & = \frac{\beta_c}{16} \int_{\R^d} t_*(p)^2 \left( \delta_{ij} g_1(\beta_c(p^2-\mu)) + 2\beta_c p_ip_j g_2(\beta_c(p^2-\mu))\right)\frac{dp}{(2\pi)^d} \,, \\
\label{eq:lambda1}
\Lambda_1 & = \frac{\beta_c^2}{4} \int_{\R^d} t_*(p)^2\, g_1(\beta_c(p^2-\mu)) \,\frac{dp}{(2\pi)^d} \,, \\
\label{eq:lambda2}
\Lambda_2 & = \frac{\beta_c^2}{4} \int_{\R^d} t_*(p)^2\, \cosh^{-2}(\beta_c(p^2-\mu)/2) \,\frac{dp}{(2\pi)^d} \,, \\
\label{eq:lambda3}
\Lambda_3 & = \frac{\beta_c^2}{16} \int_{\R^d} t_*(p)^4\, \frac{g_1(\beta_c(p^2-\mu))}{p^2-\mu}\,\frac{dp}{(2\pi)^d} \,.
\end{align}
Note that $\Lambda_2>0$ and $\Lambda_3>0$, since the integrands are pointwise positive. Moreover, one can show that the matrix $\Lambda_0$ is positive definite, see \cite[Sec. 1.4]{FHSS}.

%%%%%%%%%%%%%%%%%%%%%%%%%%%%%%%%%%%%%%%%%%%

\subsection{Refined asymptotics of the critical temperature}

As we have seen in Proposition \ref{crittemplimit}, to leading order the critical temperatures $\underline{T_c(h)}$ and $\overline{T_c(h)}$ coincide and are independent of the external potentials $W$ and $A$. We now compute the next to leading order change of the critical temperatures due to the external fields. We set
$$
D_c = \Lambda_2^{-1}\, \inf\spec_{L^2(\mathcal C)} \left( \left(-i\nabla +2A\right)^* \Lambda_0 \left(-i\nabla + 2A\right) + \Lambda_1 W \right)  \,,
$$
where the operator in parentheses is considered with periodic boundary conditions in $L^2(\mathcal C)$. The following is the main result of this paper.

\begin{theorem}\label{main}
Under Assumptions \ref{ass1} and \ref{ass2}, the critical temperatures satisfy
\begin{equation}
\label{eq:main}
-T_c D_c = \lim_{h\to 0} h^{-2} \left( \underline{T_c(h)}- T_c\right) = \lim_{h\to 0} h^{-2} \left( \overline{T_c(h)} - T_c\right) \,.
\end{equation}
\end{theorem}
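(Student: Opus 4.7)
My plan is to reduce both sides of (\ref{eq:main}) to the sign of the lowest eigenvalue of a Hessian operator. For an admissible $\Gamma$ whose off-diagonal block $\alpha$ is small, expanding the entropy around the normal state gives
\[
\mathcal F_{T,h}(\Gamma) - F_{T,h}^{(0)} = \bigl\langle \alpha,\, (\mathfrak L_{T,h} + V_h)\alpha\bigr\rangle + O(\|\alpha\|^4),
\]
with $V_h$ multiplication by $V(h^{-1}(x-y))$ and $\mathfrak L_{T,h}$ the two-body pair-kinetic operator
\[
\mathfrak L_{T,h} \,=\, \frac{\mathfrak h_h\otimes 1 + 1\otimes\overline{\mathfrak h_h}}{\tanh\bigl(\tfrac{\beta}{2}(\mathfrak h_h\otimes 1 + 1\otimes\overline{\mathfrak h_h})\bigr)}
\]
acting on symmetric pair wave functions (in the translation-invariant case, $\mathfrak L_{T,h}$ reduces to $K_T(-i\nabla)$ of Proposition \ref{crittemplimit}). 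Both critical temperatures are essentially controlled by the temperature at which $\inf\spec(\mathfrak L_{T,h}+V_h)$ changes sign.

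\textbf{Semiclassical analysis on the soft mode.} The core technical step is the expansion of $\mathfrak L_{T,h}+V_h$ restricted to the two-scale soft-mode class
\[
\alpha_h^\psi(x,y) \,=\, h^{1-d}\,\alpha_*\bigl(h^{-1}(x-y)\bigr)\,\psi\bigl(\tfrac{x+y}{2}\bigr),
\]
equipped with a Peierls phase that gauges away the leading magnetic contribution. Changing to center-of-mass $X=(x+y)/2$ and relative $r=x-y$ coordinates, passing to Fourier in $r$, and Taylor-expanding $\mathfrak L_{T,h}$ in $h$, the leading term vanishes because $\alpha_*$ is a zero mode of $K_{T_c}(-i\nabla)+V$ (Assumption \ref{ass2}), and the $O(h^2)$ correction at $T = T_c(1-Dh^2)$ works out to
\[
\bigl\langle \alpha_h^\psi, (\mathfrak L_{T,h}+V_h)\alpha_h^\psi\bigr\rangle \,=\, h^{4-d}\bigl\langle\psi,[(-i\nabla+2A)^*\Lambda_0(-i\nabla+2A) + \Lambda_1 W - \Lambda_2 D]\psi\bigr\rangle + o(h^{4-d}).
\]
The three second-order contributions come from (i) the shift $T-T_c=-T_c Dh^2$, giving $-\Lambda_2 D$; (ii) the magnetic part of $\mathfrak h_h$, giving $(-i\nabla+2A)^*\Lambda_0(-i\nabla+2A)$ (the factor $2A$ appears because both particles couple to $A$); and (iii) the electric potential $h^2 W$, giving $\Lambda_1 W$. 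The constants $\Lambda_0,\Lambda_1,\Lambda_2$ drop out as the moments (\ref{eq:lambda0})--(\ref{eq:lambda2}) of $t_*(p)^2$ against the appropriate temperature- and momentum-derivatives of $K_T$.

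\textbf{Lower bound on $\underline{T_c(h)}$.} Fix $D>D_c$. By definition of $D_c$, the effective operator $(-i\nabla+2A)^*\Lambda_0(-i\nabla+2A)+\Lambda_1 W - \Lambda_2 D$ has a negative lowest eigenvalue $-\delta<0$ with normalized eigenfunction $\psi$. Taking the trial state built from $\lambda\alpha_h^\psi$ and proceeding as in \cite{FHSS}, the BCS free energy difference is at most
\[
-\lambda^2\delta\,h^{4-d} + C\lambda^4 h^{4-d} + o(h^{4-d}),
\]
which is strictly negative for $\lambda$ small and $h$ small. Hence $\underline{T_c(h)}\geq T_c(1-Dh^2)$, and letting $D\downarrow D_c$ gives the lower bound in (\ref{eq:main}).

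\textbf{Upper bound on $\overline{T_c(h)}$ and main obstacle.} Fix $D<D_c$. A relative-entropy inequality like those in \cite{FHSS} gives
\[
\mathcal F_{T,h}(\Gamma)-F_{T,h}^{(0)} \,\geq\, \bigl\langle\alpha,(\mathfrak L_{T,h}+V_h)\alpha\bigr\rangle - C\cdot(\text{controlled quartic remainder}).
\]
Decomposing $\alpha$ into its soft-mode component and its orthogonal complement, the semiclassical expansion shows that $\mathfrak L_{T,h}+V_h \geq c\,h^2$ on the soft-mode subspace (where the effective operator is $\geq \Lambda_2(D_c-D) > 0$) and $\geq c'>0$ on the complement (since $0$ is an isolated simple eigenvalue of $K_{T_c}(-i\nabla)+V$ by Assumption \ref{ass2}). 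Bounding the quartic remainder by an operator-norm estimate on $\alpha$ then forces $\alpha\equiv 0$, giving $\overline{T_c(h)}\leq T_c(1-Dh^2)$. The main obstacle throughout is the semiclassical expansion itself, with explicit constants $\Lambda_0,\Lambda_1,\Lambda_2$ and $o(h^{4-d})$ error: one must handle the non-commutativity of $\mathfrak h_h$ (operator-ordering corrections for the Peierls phase), cleanly separate the soft mode from high-energy modes uniformly in $h$, and uniformly control the quartic remainder. This is the analogue, at the level of the Hessian around the normal state, of the microscopic-to-macroscopic reduction carried out for the full free energy in \cite{FHSS}.
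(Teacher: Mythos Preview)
Your overall strategy---identify a soft mode, expand semiclassically, recover the linearized GL operator---is the right picture, and your lower-bound sketch is close to what the paper does. But there are two concrete problems.

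\textbf{The Hessian formula is wrong.} Your operator
\[
\mathfrak L_{T,h} \,=\, \frac{\mathfrak h_h\otimes 1 + 1\otimes\overline{\mathfrak h_h}}{\tanh\bigl(\tfrac{\beta}{2}(\mathfrak h_h\otimes 1 + 1\otimes\overline{\mathfrak h_h})\bigr)}
\]
does not reduce to $K_T(-i\nabla)$ even in the translation-invariant, zero-center-of-mass case: with both one-body energies equal to $e=p^2-\mu$ it gives $2e/\tanh(\beta e)$, not $e/\tanh(\beta e/2)$. The correct second variation of the entropy produces the kernel $\dfrac{e_1+e_2}{\tanh(\beta e_1/2)+\tanh(\beta e_2/2)}$, which is genuinely a two-body operator and \emph{not} a function of $\mathfrak h_h\otimes 1 + 1\otimes\overline{\mathfrak h_h}$ alone. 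This matters because your whole semiclassical expansion is organized around $\mathfrak L_{T,h}$.

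\textbf{The upper bound has a real gap.} You write the free-energy difference as $\langle\alpha,(\mathfrak L_{T,h}+V_h)\alpha\rangle$ minus a ``controlled quartic remainder'' and then argue that positivity of the quadratic part forces $\alpha=0$. But on the soft-mode subspace the quadratic form is only of size $h^{2}$, while you have no a~priori bound on $\|\alpha\|$ that would make the quartic term (and the higher-order entropy corrections, which are not simply quartic) subdominant. The Klein-type inequality you invoke, namely $\mathcal F_T(\Gamma)-\mathcal F_T(\Gamma_0)\ge \Tr\overline\alpha K_T^{A,W}\alpha+\iint V|\alpha|^2$ (the paper's (4.8)), has \emph{no} negative remainder at all, but it is also too crude: $K_T^{A,W}$ acts on one variable only, and this bound yields merely the $O(1)$ statement of Proposition~\ref{aprioriupper}, not the $h^2$ correction.

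The paper circumvents both issues by \emph{not} Taylor-expanding around $\Gamma_0$. Instead it uses the exact identity of Lemma~\ref{lem:id} with a carefully chosen reference state $\Gamma_\Delta$ built from the soft-mode part $\psi_<$ of the (unknown) minimizer. The free-energy difference then splits into a semiclassical piece $-\tfrac{T}{2}\Tr_0[\ln(1+e^{-\beta H_\Delta})-\ln(1+e^{-\beta H_0})]$, which Theorem~\ref{thm:scl} turns into $h^{4-d}\mathcal E_D(\psi_<)$ with the \emph{full} GL functional (quartic term included), plus a \emph{nonnegative} relative entropy $\tfrac{T}{2}\mathcal H_0(\Gamma,\Gamma_\Delta)$, plus explicit potential terms. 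The decomposition lemma (Proposition~\ref{alphadecomp}) supplies the a~priori structure of $\alpha$ and the crucial bound $\|\psi\|\lesssim 1$; a frequency cutoff $\psi=\psi_<+\psi_>$ is needed so that $\psi_<\in H^2$; and Lemma~\ref{final} handles the cross terms between $\sigma=\alpha-\alpha_{\mathrm{GL}}^{(\psi_<)}$ and $\alpha_\Delta$. None of these ingredients is optional, and none is visible in your sketch.
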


\begin{remarks*}
\begin{enumerate}[(1)]
\item Clearly, $D_c$ depends non-trivially on $W$ and $A$, so the external fields do change $\underline{T_c(h)}$ and $\overline{T_c(h)}$ to order $h^2$. This influence is the same on both temperatures and so, in particular, $\overline{T_c(h)}- \underline{T_c(h)}=o(h^2)$.
\item Through simple examples one can see that $D_c$ can be positive, zero or negative. So external fields can both increase and decrease the critical temperature in the BCS system. If $W\equiv 0$, however, then, since $\Lambda_0\geq 0$, we always have $D_c\geq 0$. Moreover, for any fixed $W$, the $D_c$ with $A\not\equiv 0$ is never smaller than the $D_c$ with $A\equiv 0$. The latter statement follows from the diamagnetic inequality since $\Lambda_0$ is real and positive. In other words the magnetic field decreases the critical temperature.
\item The role of $D_c$ can be understood as arising via a linearization of GL theory, as will be explained in the next subsection.
\item Note the factor $2$ in front of $A$ in the definition of $D_c$, as compared to the $1$ in $\mathfrak h$. This comes from the fact that $\psi$ describes Cooper \emph{pairs}.
\item Our proof of \eqref{eq:main} is constructive and leads to quantitative error bounds. In fact, we shall show that
\begin{equation}
\label{eq:mainprecise}
-D_c - C h \leq \frac{\underline{T_c(h)} - T_c}{ h^2 T_c}
\leq \frac{\overline{T_c(h)} - T_c}{ h^2 T_c} \leq - D_c + C \mathcal R \,,
\end{equation}
where
\begin{equation}
\label{eq:r}
\mathcal R =
\begin{cases}
h^{1/3} & \text{if}\ d=1 \,,\\
h^{1/3} (\ln(1/h))^{1/6} & \text{if}\ d=2 \,,\\
h^{1/5} & \text{if}\ d=3 \,.
\end{cases}
\end{equation}
\end{enumerate}
\end{remarks*}

The proof of Theorem \ref{main} is given in Sections \ref{sec:mainlower} (lower bound) and \ref{sec:mainupper} (upper bound).

\subsubsection*{Notation}
In \eqref{eq:mainprecise} and everywhere else in this paper $C$ denotes various generic constants that depend only on some fixed, $h$-independent, quantities like $\mu$, $T_c$, $V$, $W$, $A$, for instance. Also, we write $x\lesssim y$ to denote $x\leq C y$ with a generic constant $C$.

%%%%%%%%%%%%%%%%%%%%%%%%%%%%%%%%%%%%%%%%%%%

\subsection{Connection to Ginzburg--Landau theory}

Using the matrix $\Lambda_0$ and the coefficients $\Lambda_1,\Lambda_2$ and $\Lambda_3$ defined in \eqref{eq:lambda0}, \eqref{eq:lambda1}, \eqref{eq:lambda2} and \eqref{eq:lambda3}, as well as another parameter $D\in\R$ we now introduce the Ginzburg--Landau functional
$$
\mathcal E_D(\psi) = \int_{\mathcal C} \left( \overline{\left( -i\nabla + 2 A\right)\psi} \cdot \Lambda_0\left(-i\nabla + 2A\right)\psi + \Lambda_1 W|\psi|^2 - \Lambda_2 D |\psi|^2 + \Lambda_3 |\psi|^4 \right)dx \,. 
$$
We consider this functional for $\psi\in H^1_\per(\R^d)$, the periodic functions in $H^1(\R^d)$. We now characterize $D_c$ in terms of $\inf_\psi \mathcal E_D(\psi)$. 

\begin{lemma}\label{dc}
The critical value $D_c$ is uniquely characterized by $\inf_\psi \mathcal E_{D}(\psi)=0$ for $D\leq D_c$ and $\inf_\psi \mathcal E_{D}(\psi)<0$ for $D> D_c$.
\end{lemma}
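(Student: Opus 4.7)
The plan is to reduce the lemma to a spectral inequality combined with a simple variational argument using a ground-state trial function. Write the quadratic part of $\mathcal E_D$ as
$$
\mathcal E_D(\psi) = \langle \psi, L \psi\rangle - \Lambda_2 D \|\psi\|_2^2 + \Lambda_3 \|\psi\|_4^4,
\qquad L := (-i\nabla + 2A)^* \Lambda_0 (-i\nabla + 2A) + \Lambda_1 W,
$$
so that by definition $\inf\spec_{L^2(\mathcal C)} L = \Lambda_2 D_c$. Since $A$ and $W$ are smooth and periodic (by Assumption~\ref{ass1}, Fourier coefficients summable), and $\Lambda_0$ is a positive-definite matrix, the operator $L$ is a self-adjoint elliptic operator on the torus $\mathcal C$ with compact resolvent, so its lowest eigenvalue is attained by some $\psi_0 \in H^1_\per(\R^d)$, which (by elliptic regularity and the torus being compact) lies in $L^4(\mathcal C)$.

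For $D \leq D_c$, the spectral inequality $\langle\psi,L\psi\rangle \geq \Lambda_2 D_c \|\psi\|_2^2$ gives
$$
\mathcal E_D(\psi) \geq (\Lambda_2 D_c - \Lambda_2 D)\|\psi\|_2^2 + \Lambda_3\|\psi\|_4^4 \geq 0,
$$
where I use $\Lambda_3 > 0$. Equality holds at $\psi \equiv 0$, so $\inf_\psi \mathcal E_D(\psi) = 0$.

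For $D > D_c$, test with $\lambda\psi_0$ for real $\lambda > 0$:
$$
\mathcal E_D(\lambda \psi_0) = \lambda^2 \Lambda_2 (D_c - D)\|\psi_0\|_2^2 + \lambda^4 \Lambda_3 \|\psi_0\|_4^4.
$$
Since the $\lambda^2$ coefficient is strictly negative and the $\lambda^4$ coefficient is positive, choosing $\lambda$ sufficiently small makes the right-hand side strictly negative, hence $\inf_\psi \mathcal E_D(\psi) < 0$. Uniqueness of the characterizing value $D_c$ then follows immediately, because the two properties are mutually exclusive at any given $D$.

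The main (very minor) obstacle is simply verifying that the infimum defining $D_c$ is attained so that $\psi_0$ exists as a genuine eigenfunction — but this is automatic from compactness of the resolvent of $L$ on the torus with periodic boundary conditions. Everything else reduces to the one-line spectral bound and a rescaled trial-function computation, so no further ingredients are needed.
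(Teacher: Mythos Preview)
Your proof is correct and follows essentially the same approach as the paper: both arguments reduce the question to the sign of the quadratic form $\langle\psi,L\psi\rangle-\Lambda_2 D\|\psi\|^2$ and then exploit the scaling $\psi\mapsto t\psi$ together with $\Lambda_3>0$. The only cosmetic difference is that the paper avoids invoking existence of a ground state $\psi_0$ by observing directly that $\inf_{t}\mathcal E_D(t\psi)<0$ if and only if $\langle\psi,L_D\psi\rangle<0$, and then appealing to the variational principle for $\inf\spec L_D$; this sidesteps the (admittedly easy) compact-resolvent argument you give.
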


\begin{proof}
Clearly, we have $\inf_\psi \mathcal E_{D}(\psi)\leq \mathcal E_{D}(0)=0$ for any $D\in\R$, so we have to show that $\inf_\psi \mathcal E_{D}(\psi)<0$ if and only if $D> D_c$.

Let us denote $L_D=\left(-i\nabla +2A\right)^* \Lambda_0 \left(-i\nabla + 2A\right) + \Lambda_1 W-\Lambda_2 D$, considered as a self-adjoint operator in $L^2(\mathcal C)$ with periodic boundary conditions. Since $\Lambda_3\geq 0$, we have for any $\psi\in H^1_\per(\R^d)$,
\begin{align*}
\inf_{t\in\R} \mathcal E_D(t\psi) < 0 \quad & \text{if and only if}\quad \langle \psi|L_D|\psi\rangle < 0 \,.
\end{align*}
Thus,
\begin{align*}
\inf_{\psi\in H^1_\per(\R^d)} \mathcal E_D(\psi) < 0 \quad & \text{if and only if}\quad \langle \psi|L_D|\psi\rangle < 0 \ \text{for some}\ \psi\in H^1_\per(\R^d) \,.
\end{align*}
By the variational principle, the latter condition is equivalent to $\inf\spec_{L^2(\mathcal C)} L_D<0$. Since
$$
\inf\spec_{L^2(\mathcal C)} L_D = \Lambda_2 \left(D_c-D\right)
$$
and $\Lambda_2>0$, we infer that $\inf\spec_{L^2(\mathcal C)} L_D<0$ is equivalent to $D>D_c$, as claimed.
\end{proof}

Let us discuss the similarities and difference between this paper and our previous paper \cite{FHSS}. In \cite{FHSS} it was shown that for $T=T_c(1-h^2 D)$
\begin{equation}
\label{eq:oldpaper}
\inf_\Gamma \mathcal F_{T,h}(\Gamma) - F_{T,h}^{(0)} = h^{-d+4} \left( \inf_{\psi\in H^1_\per(\R^d)} \mathcal E_D(\psi) + o(1) \right) \,,
\end{equation}
as $h\to 0$. In \cite{FHSS} this was shown for $D>0$, but in \cite{FHSS2} it was remarked that the same proof works even for $D\leq 0$, provided the normalization of $\alpha_*$ is changed accordingly. The result \eqref{eq:oldpaper} neither implies nor is implied by our Theorem \ref{main} here.

Indeed, if $D> D_c$, then the asymptotics \eqref{eq:oldpaper} together with Lemma \ref{dc} imply that $\limsup h^{d-4} \left( \inf_\Gamma \mathcal F_{T_c(1-h^2 D),h}(\Gamma) - F_{T_c(1-h^2 D),h}^{(0)} \right) <0$, which, in turn, implies that $\liminf h^{-2} \left(\overline{T_c(h)}-T_c \right) \geq -DT_c$, so $\liminf h^{-2} \left(\overline{T_c(h)}-T_c \right) \geq -D_cT_c$. Asymptotics \eqref{eq:oldpaper} does \emph{not} imply, however, that $\liminf h^{-2} \left(\underline{T_c(h)}-T_c \right) \geq -D_cT_c$. More explicitly, superconductivity could fail for values of $T$ smaller than $T_c$ by reasons that have nothing to do with Ginzburg--Landau theory. This possibility is ruled out by our Proposition \ref{crittemplimit}.

On the other hand, if $D\leq  D_c$, then from the asymptotics \eqref{eq:oldpaper} we know that $\liminf h^{d-4} \left( \inf_\Gamma \mathcal F_{T_c(1-h^2 D),h}(\Gamma) - F_{T_c(1-h^2 D),h}^{(0)} \right) =0$, but this is not enough to conclude that we actually have an exact equality $\inf_\Gamma \mathcal F_{T_c(1-h^2 D),h}(\Gamma) = F_{T_c(1-h^2 D),h}^{(0)}$ for all sufficiently small $h>0$. This exact equality is necessary to deduce that one has $\limsup h^{-2} \left(\overline{T_c(h)}-T_c \right) \leq -DT_c$, and to show this equality is one of the main contributions of this paper.

Conversely, Theorem \ref{main} does not imply \eqref{eq:oldpaper}. For instance, \eqref{eq:oldpaper} depends on the coefficient $\Lambda_3$, whereas the assertion of Theorem \ref{main} is independent of this coefficient. (It was crucial  in the proof of Lemma \ref{dc}  that $\Lambda_3\geq 0$, however.)

%%%%%%%%%%%%%%%%%%%%%%%%%%%%%%%%%%%%%%%%
%%%%%%%%%%%%%%%%%%%%%%%%%%%%%%%%%%%%%%%%

\section{Preliminaries}

\subsection{Definition of trace and norms}\label{sec:def}
Let $A$ be a bounded periodic operator on either $L^2(\R^d)$ or
$L^2(\R^d;\C^2)$, i.e., an operator that commutes with translations by a
unit length in any of the $d$ coordinate directions. The trace per
unit volume of $A$ is defined as the trace of $\chi A \chi$,
where $\chi$ is the characteristic functions of a unit cube, i.e., the
projection onto functions supported in this cube. Obviously, the
location of the cube is irrelevant. For $p\geq 1$ we also denote the
$p$-norm of $A$ by
\begin{equation}\label{def:pnorm}
  \|A\|_p = \left( \Tr \left(A^* A\right)^{p/2} \right)^{1/p} \,.
\end{equation}
Here and in the remainder of this paper, $\Tr $ denotes the trace per
unit volume. We also use the notation $\|A\|_\infty$ for the standard
operator norm.

Standard properties, like cyclicity, H\"older's inequality, Klein's inequality and the Lieb--Thirring inequality are valid for the trace per unit volume, see \cite[Sec. 3]{FHSS}.

For a periodic operator $A$ on $L^2(\R^d;\C^2)$ we define
\begin{equation}\label{deftrs}
\Trs A = \Tr \left[ P_0 A P_0 + Q_0 A Q_0 \right]
\end{equation}
with 
\begin{equation}\label{defp0}
P_0 =  \left( \begin{array}{cc} 1 & 0 \\ 0 & 0 \end{array}\right)
\qquad\text{and}\qquad
Q_0 =  \left( \begin{array}{cc} 0 & 0 \\ 0 & 1 \end{array}\right)
\end{equation}
Note that if $A$ is locally trace class, then $\Trs A = \Tr A$. This identity also holds for all non-negative operators $A$, in the sense that either both sides are infinite or otherwise equal. 

We define the $H^1$ norm of a periodic operator $A$ by
\begin{equation}\label{def:h1}
  \|A\|_{H^1}^2 = \Tr \left[ A^* \left(1-h^2\nabla^2\right) A  \right]\,.
\end{equation}
In other words, $\|A\|_{H^1}^2 = \|A\|_2^2 + h^2 \|\nabla A\|_2^2$. Note that this definition depends on $h$ and is not symmetric, i.e., $\|A\|_{H^1} \neq \|A^*\|_{H^1}$ in general.

For functions $\psi$ on $\mathcal C$, we use the short-hand notation $\|\psi\|_p$ for the norm on $L^p(\calC)$ and we often abbreviate $\|\psi\|=\|\psi\|_2$. Likewise, $\langle \,\cdot\, | \, \cdot\, \rangle$ denotes the inner product on $L^2(\calC)$. By $H^k_\per(\R^d)$, $k=1,2$, we denote the space of periodic functions in $H^k_\loc(\R^d)$ and we use the norms $\|\cdot\|_{H^k(\calC)}$.

%%%%%%%%%%%%%%%%%%%%%%%%%%%%%%%%%%%%%%%

\subsection{Key identity}

Let us recall the definition of the operator $\mathfrak h=\mathfrak h_h$ in \eqref{eq:hh} and the formula \eqref{eq:hhbar} for $\overline{\mathfrak h}= \overline{\mathfrak h_h}$. For any periodic operator $\Delta$ on $L^2(\R^d)$ satisfying $\overline\Delta=\Delta^*$ (that is, in terms of integral kernels $\Delta(x+1,y+1)=\Delta (x,y)=\Delta(y,x)$ for all $x,y\in\R^d$) we introduce the operators
\begin{equation}\label{eq:hdelta}
H_\Delta=\left(\begin{array}{cc}\mathfrak h & \Delta \\
\overline{\Delta}&-\overline{\mathfrak h}
\end{array}\right)
\end{equation}
and
\begin{equation}
\label{eq:gammadelta}
\Gamma_\Delta = \left(1+\exp(\beta H_\Delta)\right)^{-1} \,.
\end{equation}
Note that this notation is consistent with the notation $\Gamma_0$ for the normal state, for which $\Delta\equiv 0$.

The following identity turns out to be very useful. It was already used in \cite{FHSS}; we present its proof here for completeness.

\begin{lemma}\label{lem:id}
Let $\Gamma$ be admissible and denote $\alpha=\Gamma_{12}$. Let $\Delta$ be a periodic operator satisfying $\overline{\Delta}=\Delta^*$ and define $\tilde\alpha$ for $x,y$ with $V(h^{-1}(x-y))\neq 0$ by
\begin{equation}
\label{eq:deltatilde}
\Delta(x,y) = 2\, V(h^{-1}(x-y))\, \tilde\alpha(x,y) \,.
\end{equation}
Assume that the function $|V(h^{-1}(x-y)|^{1/2} \tilde \alpha(x,y)$ is in $L^2(\mathcal C\times \R^d)$ and that the diagonal entries of $\ln\left(1+e^{-\beta H_\Delta}\right) - \ln\left(1+e^{-\beta H_0} \right) $ are locally trace class. Then
\begin{align}\nonumber
\mathcal{F}_T(\Gamma) - \mathcal{F}_T(\Gamma_0) & 
= - \frac T2 \Tr_0 \left[ \ln\left(1+e^{-\beta H_\Delta}\right) - \ln\left(1+e^{-\beta H_0} \right) \right] \\ \nonumber & \quad + \frac T 2 \mathcal{H}_0(\Gamma,\Gamma_\Delta) - \iint_{\mathcal C\times\R^d} V(h^{-1}(x-y))|\tilde\alpha(x,y)|^2 \,dx\,dy
\\ & \quad + \iint_{\mathcal C\times\R^d} ￼V (h^{-1}(x - y))\left|\tilde\alpha(x, y) - \alpha(x, y)\right|^2 \, dx\, dy \,,  \label{fi}
\end{align}
where $\mathcal{H}_0(\Gamma,\Gamma_\Delta)$ denotes the relative entropy
\begin{equation}
\label{eq:relent}
\mathcal{H}_0 (\Gamma,\Gamma_\Delta) = \Tr_0 \left[ \Gamma\left( \ln \Gamma - \ln \Gamma_\Delta \right) + \left(1-\Gamma\right) \left( \ln \left(1-\Gamma\right) - \ln\left(1-\Gamma_\Delta\right) \right) \right]\,.
\end{equation}
\end{lemma}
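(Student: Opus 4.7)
The plan is to rewrite $\mathcal{F}_T(\Gamma)-\mathcal{F}_T(\Gamma_0)$ as a sum of three algebraic pieces: a relative entropy with reference state $\Gamma_\Delta$, a difference of grand potentials $-\tfrac{T}{2}\Tr_0[\ln(1+e^{-\beta H_\Delta})-\ln(1+e^{-\beta H_0})]$, and an integral coming from completing the square in the interaction. The desired identity \eqref{fi} will then drop out by direct comparison.

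First I would deal with the entropy. Since $\Gamma_\Delta=(1+e^{\beta H_\Delta})^{-1}$, the functional calculus gives $\ln\Gamma_\Delta-\ln(1-\Gamma_\Delta)=-\beta H_\Delta$ and $\ln(1-\Gamma_\Delta)=-\ln(1+e^{-\beta H_\Delta})$. Substituting into \eqref{eq:relent} and using the particle--hole symmetry $U\Gamma U^\dagger=1-\overline{\Gamma}$, which forces $\Tr_0\,\Gamma\ln\Gamma=\Tr_0(1-\Gamma)\ln(1-\Gamma)$ and hence $S(\Gamma)=-\tfrac12\Tr_0[\Gamma\ln\Gamma+(1-\Gamma)\ln(1-\Gamma)]$, one obtains the key identity
\begin{equation*}
T\mathcal{H}_0(\Gamma,\Gamma_\Delta)=-2TS(\Gamma)+\Tr_0[\Gamma H_\Delta]+T\Tr_0\ln(1+e^{-\beta H_\Delta}).
\end{equation*}
The same computation with $\Delta\equiv 0$ (and $\mathcal{H}_0(\Gamma_0,\Gamma_0)=0$) gives the analogous identity for $S(\Gamma_0)$; subtracting yields a closed formula for $-T[S(\Gamma)-S(\Gamma_0)]$ in terms of $\tfrac{T}{2}\mathcal{H}_0$, the log-difference, and $\tfrac12\Tr_0[\Gamma H_\Delta-\Gamma_0 H_0]$.

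Next I would treat the kinetic term. A direct block computation of $\Tr_0[H_0\Gamma]$, combined with $\tilde\gamma=\overline{\gamma}$ and the reality of $\Tr\,\mathfrak{h}\gamma$, gives $\Tr_0[H_0\Gamma]=2\Tr\,\mathfrak{h}\gamma-\Tr\,\overline{\mathfrak{h}}$; the divergent constant $\Tr\,\overline{\mathfrak{h}}$ cancels in the difference, so that $\Tr\,\mathfrak{h}(\gamma-\gamma_0)=\tfrac12\Tr_0[H_0(\Gamma-\Gamma_0)]$. Inserting this identity and the entropy identity into $\mathcal{F}_T(\Gamma)-\mathcal{F}_T(\Gamma_0)=\Tr\,\mathfrak{h}(\gamma-\gamma_0)-T[S(\Gamma)-S(\Gamma_0)]+\iint V(h^{-1}(x-y))|\alpha|^2$ collapses all operator-trace contributions to $\tfrac12\Tr_0[\Gamma(H_0-H_\Delta)]$ after cancellation, while $\tfrac{T}{2}\mathcal{H}_0$ and the log-difference already appear in the form required by \eqref{fi}. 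Since $H_0-H_\Delta$ is purely off-diagonal with blocks $-\Delta,-\overline{\Delta}$, a short block computation together with $\overline{\Tr[\alpha^*\Delta]}=\Tr[\alpha\overline{\Delta}]$ (from $\overline{\Delta}=\Delta^*$ and cyclicity) gives $\tfrac12\Tr_0[\Gamma(H_0-H_\Delta)]=-\re\Tr[\alpha^*\Delta]$.

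Finally I would pass to integral kernels. The hypothesis $\overline{\Delta}=\Delta^*$ combined with reflection symmetry of $V$ forces $\tilde\alpha(x,y)=\tilde\alpha(y,x)$, while admissibility gives $\alpha(x,y)=\alpha(y,x)$. Using these and $\Delta(x,y)=2V(h^{-1}(x-y))\tilde\alpha(x,y)$, one computes
\begin{equation*}
\re\Tr[\alpha^*\Delta]=2\re\iint_{\mathcal{C}\times\R^d}V(h^{-1}(x-y))\,\overline{\alpha(x,y)}\,\tilde\alpha(x,y)\,dx\,dy.
\end{equation*}
Combining $-\re\Tr[\alpha^*\Delta]$ with $\iint V(h^{-1}(x-y))|\alpha|^2$ and completing the square produces precisely $\iint V|\alpha-\tilde\alpha|^2-\iint V|\tilde\alpha|^2$, which is what remains to match \eqref{fi}. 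The only genuine obstacle is analytic rather than algebraic: several of the individual traces above need not be absolutely convergent, and each rearrangement must be justified. This is exactly what the two regularity hypotheses of the lemma (local trace class of the diagonal of $\ln(1+e^{-\beta H_\Delta})-\ln(1+e^{-\beta H_0})$, and $L^2$ control of $|V(h^{-1}\cdot)|^{1/2}\tilde\alpha$) are designed to secure, typically by truncating $\Delta$ to obtain a trace-class setting and passing to the limit.
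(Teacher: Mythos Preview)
Your argument is correct and follows essentially the same route as the paper's proof: both rely on the particle--hole symmetry to write $S(\Gamma)=-\tfrac12\Tr_0[\Gamma\ln\Gamma+(1-\Gamma)\ln(1-\Gamma)]$, the functional-calculus identity $\beta H_\Delta=\ln(1-\Gamma_\Delta)-\ln\Gamma_\Delta$, the block computation extracting $\re\Tr[\Delta\overline\alpha]$, and completion of the square in the interaction term. The only organizational difference is that the paper passes through the intermediate quantity $S(\Gamma_\Delta)$ (via the identity for $\Gamma_\Delta\ln\Gamma_\Delta+(1-\Gamma_\Delta)\ln(1-\Gamma_\Delta)-\Gamma_0\ln\Gamma_0-(1-\Gamma_0)\ln(1-\Gamma_0)$) and recognizes $\tfrac12\Tr_0 H_\Delta(\Gamma-\Gamma_\Delta)-TS(\Gamma)+TS(\Gamma_\Delta)$ as $\tfrac{T}{2}\mathcal H_0(\Gamma,\Gamma_\Delta)$ at the end, whereas you expand $\mathcal H_0(\Gamma,\Gamma_\Delta)$ at the outset and cancel against the kinetic term afterwards; these are two bookkeepings of the same algebra. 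One small caution: your line ``$\Tr_0[H_0\Gamma]=2\Tr\mathfrak h\gamma-\Tr\overline{\mathfrak h}$, the divergent constant cancels'' is formally correct but should, in a clean write-up, be phrased directly as $\Tr\mathfrak h(\gamma-\gamma_0)=\tfrac12\Tr_0[H_0(\Gamma-\Gamma_0)]$ without ever isolating the divergent piece.
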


Note that $\tilde\alpha$ is only defined when $V(h^{-1}(x-y))\neq 0$, but this is enough to make the right side of \eqref{fi} well-defined.

In our applications below, the operator $\Delta$ will be of the form
\begin{equation}
\label{eq:Delta}
\Delta = - \frac{h}{2} \left( \psi(x) t_*(-ih\nabla) + t_*(-ih\nabla) \psi(x) \right) \,,
\end{equation}
where $t_*$ was defined in \eqref{eq:deft} and $\psi$ is some function, which we choose differently in different situations. In this case, Lemma \ref{lem:id} says that
\begin{align}\nonumber
\mathcal{F}_T(\Gamma) - \mathcal{F}_T(\Gamma_0) & 
= - \frac T2 \Tr \left[ \ln\left(1+e^{-\beta H_\Delta}\right) - \ln\left(1+e^{-\beta H_0} \right) \right] \\ \nonumber & \quad + \frac T 2 \mathcal{H}_0(\Gamma,\Gamma_\Delta) - \iint_{\mathcal C\times\R^d} V(h^{-1}(x-y))|\alpha_{\rm GL}^{(\psi)}(x,y)|^2 \,dx\,dy
\\ & \quad + \iint_{\mathcal C\times\R^d} ￼V (h^{-1}(x - y) )\left|\alpha_{\rm GL}^{(\psi)}(x, y) - \alpha(x, y)\right|^2 \, dx\, dy \,, \label{fi2}
\end{align}
where
\begin{equation}
\label{eq:alphagl}
\alpha_{\rm GL}^{(\psi)} = \frac{h}{2} \left( \psi(x) \widehat{\alpha_*}(-ih\nabla) + \widehat{\alpha_*}(-ih\nabla) \psi(x) \right) \,.
\end{equation}
Indeed, the integral kernel of $\Delta$ is of the form \eqref{eq:deltatilde} with
$$
\tilde\alpha(x,y) = \frac{h^{1-d}}{2(2\pi)^{d/2}} \left(\psi(x)+\psi(y)\right)\alpha _*(\tfrac{x-y}h) =  \alpha_{\rm GL}^{(\psi)} (x,y) \,.
$$
We emphasize that the equation of $\alpha_*$ has \emph{not} been used in the derivation of \eqref{fi2}, so $\alpha_*$ here could be replaced by any other function.

\begin{proof}[Proof of Lemma \ref{lem:id}]
We begin with a remark about the entropy of an admissible state. Recall the condition \eqref{eq:admissible} for admissibility. Since $U$ is unitary and complex conjugation is anti-unitary, we learn that
\begin{equation}\label{ent:ref}
  S(\Gamma) = - \tfrac 12 \Tr \left[ \Gamma \ln \Gamma  +  (1-\Gamma) \ln (1-\Gamma)\right]\,.
\end{equation}
Here, $\Tr$ could as well be replaced by $\Tr_0$, the sum of the traces per unit volume of the diagonal entries of a $2\times 2$ matrix-valued operator defined in (\ref{deftrs}), since the operator in question is non-positive.

The second preliminary remark is that if $\Delta$ is a periodic operator satisfying $\overline\Delta=\Delta^*$ then $\Gamma_\Delta$ is admissible. This follows from the fact that $U H_{\Delta} U^\dagger = - \overline{ H_\Delta}$, which implies that $U\Gamma_\Delta U^\dagger = 1 - \overline{ \Gamma_\Delta}$. In particular, \eqref{ent:ref} is valid for $\Gamma=\Gamma_\Delta$.

We have 
\begin{equation}\label{tci2}
  H_\Delta \Gamma - H_0 \Gamma_0 = \left( \begin{array}{cc} \mathfrak h ( \gamma - \gamma_0) + \Delta \overline{\alpha} &  \mathfrak h \alpha + \Delta ( 1- \overline{\gamma})  \\ \overline \Delta \gamma + \overline{\mathfrak h} \overline{\alpha} & \overline{\mathfrak h} ( \overline{\gamma} - \overline{\gamma_0}) + \overline\Delta \alpha \end{array} \right) 
\end{equation}
and hence
\begin{equation}
\Tr \mathfrak h (\gamma-\gamma_0)  = \tfrac 12 \Tr_0   \left( H_\Delta \Gamma - H_0 \Gamma_0 \right) - \Re \Tr \Delta \overline \alpha \,.
\end{equation}
The last term equals
\begin{equation}
\Tr \Delta \overline \alpha = 2 \iint_{\mathcal C\times\R^d}  V(h^{-1}(x-y))\, \tilde\alpha(x,y) \overline{\alpha(x,y)} \,dx\,dy \,.
\end{equation}
A simple calculation, using $\beta H_\Delta = \ln(1-\Gamma_\Delta)-\ln\Gamma_\Delta$, shows that
\begin{align}\nonumber
  & \Gamma_\Delta \ln \Gamma_\Delta + (1-\Gamma_\Delta) \ln (1-\Gamma_\Delta) - \Gamma_0 \ln \Gamma_0 - (1-\Gamma_0) \ln (1-\Gamma_0) \\
  & = - \beta H_\Delta \Gamma_\Delta + \beta H_0 \Gamma_0 - \ln \left( 1+\exp\left(- \beta H_\Delta \right)\right) + \ln \left( 1+\exp\left(-\beta H_0 \right)\right) \,. \label{tci1}
\end{align}
Hence
\begin{align}\nonumber 
& \mathcal{F}_T(\Gamma) - \mathcal{F}_T(\Gamma_0) \\ \nonumber & = \Tr  \mathfrak h (\gamma-\gamma_0) - T S(\Gamma) + T S(\Gamma_0) + \iint_{\mathcal C\times\R^d} V(h^{-1}(x-y))|\alpha(x,y)|^2\,dx\,dy
  \\ \nonumber & =\tfrac 12 \Tr_0    H_\Delta \left( \Gamma - \Gamma_\Delta \right)  - T S(\Gamma) + T S(\Gamma_\Delta)  \\  \nonumber & \quad  - \frac T2 \Tr \left[ \ln\left(1+e^{-\beta H_\Delta}\right) - \ln\left(1+e^{-\beta H_0} \right) \right] \\ \nonumber & \quad   - \iint_{\mathcal C\times\R^d} V(h^{-1}(x-y))|\alpha_{\rm GL}^{(\psi)}(x,y)|^2 \,dx\,dy
\\ & \quad + \iint_{\mathcal C\times\R^d} ￼V (h^{-1}(x - y) )\left|\alpha_{\rm GL}^{(\psi)}(x, y) - \alpha(x, y)\right|^2 \, dx\, dy \,.
\end{align}
The terms in the first line on the right side combined yield $\frac T 2 \mathcal{H}_0(\Gamma,\Gamma_\Delta) $. This completes the proof.
\end{proof}

%%%%%%%%%%%%%%%%%%%%%%%%%%%%%%%%%%

\subsection{Klein's inequality}

Here we present a general estimate for the relative entropy appearing in \eqref{eq:relent}. In this subsection $H^0$ and $0\leq\Gamma\leq 1$ are arbitrary self-adjoint operators in a Hilbert space of the form $\mathcal H\otimes\C^2$, not necessarily coming from BCS theory. The regularized trace $\Tr_0$ is defined as in \eqref{deftrs} and we assume that the operator $P_0$ there (considered as an operator in $\mathcal H\otimes\C^2$) commutes with $H^0$. Let $\Gamma^0:=\left(1 + \exp(H^0)\right)^{-1}$. It is well-known that
\begin{align*}
\H_0(\Gamma,\Gamma^0) & = \Tr\left[ \Gamma \left( \ln\Gamma - \ln\Gamma^0\right) + (1-\Gamma) \left( \ln(1-\Gamma )- \ln (1-\Gamma^0)\right)\right] \\
& = \Tr \left( H^0 \Gamma + \Gamma\ln\Gamma+ (1-\Gamma)\ln(1-\Gamma) + \ln\left(1+ \exp(- H^0)\right) \right)
\end{align*}
is non-negative, and equals zero if and only if $\Gamma=\Gamma^0$. (This can be proved for example using Klein's inequality.) The following lemma quantifies the positivity of $\H_0$ and improves an earlier result from \cite{HLS}. Its proof can be found in \cite[Lemma 1]{FHSS}.

\begin{lemma}\label{lem:klein}
  For any $0\leq \Gamma\leq 1$ and any $\Gamma^0$ of the form
  $\Gamma^0 = (1+e^{H^0})^{-1}$ commuting with $P_0$ in (\ref{defp0}),
  \begin{equation}\label{eq:lem:klein}
    \H_0(\Gamma,\Gamma^0) \geq  \Tr_0\left[ \frac {H^0}{\tanh (H^0/2)} \left( \Gamma - \Gamma^0\right)^2\right]  + \frac 43 \Tr\left[ \Gamma(1-\Gamma) - \Gamma^0(1-\Gamma^0)\right]^2\,.
  \end{equation}
\end{lemma}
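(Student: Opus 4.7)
The plan is to exploit the Bregman-divergence form of the relative entropy. Setting $f(t) = t\ln t + (1-t)\ln(1-t)$, one has $f'(t) = \ln(t/(1-t))$ and $-f'(\Gamma^0) = H^0$ from the definition of $\Gamma^0$, so a direct rearrangement of \eqref{eq:relent} gives
\begin{equation*}
\H_0(\Gamma, \Gamma^0) = \Tr_0\bigl[ f(\Gamma) - f(\Gamma^0) - f'(\Gamma^0)(\Gamma - \Gamma^0) \bigr].
\end{equation*}
The task is thus to prove a quantitative Klein-type bound for the Bregman divergence of this specific convex $f$, with \emph{two} explicit lower-order remainders.

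The heart of the argument will be a sharp scalar inequality: for all $x, y \in [0, 1]$,
\begin{equation*}
f(y) - f(x) - f'(x)(y - x) \geq A(x)(y - x)^2 + \tfrac{4}{3}\bigl(y(1-y) - x(1-x)\bigr)^2,
\end{equation*}
where $A(x) := \ln((1-x)/x)/(1 - 2x)$ with $A(1/2) := 2$ by continuity. The identity $A(\Gamma^0) = H^0 / \tanh(H^0/2)$ then follows from $\tanh(H^0/2) = 1 - 2\Gamma^0$. I would prove this inequality by fixing $x$ and noting that both sides, together with their first $y$-derivatives, agree at $y = x$. The coefficient $\tfrac{4}{3}$ is pinned down by saturation at $x = y = 1/2$: since $f''(1/2) = 4$ and $f^{(4)}(1/2) = 32$, at $(x, y) = (1/2, 1/2 + \epsilon)$ both sides equal $2\epsilon^2 + \tfrac{4}{3}\epsilon^4 + O(\epsilon^6)$. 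The remaining one-variable verification is conveniently carried out after the substitution $x = (1 - a)/2$, $y = (1 - b)/2$, which turns the inequality into an algebraic-plus-logarithmic statement in $(a, b) \in [-1, 1]^2$ amenable to direct calculus.

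To lift the scalar bound to operators, I would diagonalize $\Gamma^0$ (which simultaneously diagonalizes $H^0$), set $R = \Gamma - \Gamma^0$, and invoke the Daleckii--Krein double-operator-integral representation of the Bregman divergence. This writes $\H_0(\Gamma, \Gamma^0)$ as a weighted sum of $|R_{ij}|^2$ with positive kernels built from $f''$ along the segment joining $\gamma^0_i$ and $\gamma^0_j$. For diagonal matrix elements ($i = j$), the scalar inequality applied to $(x, y) = (\gamma^0_i, \gamma^0_i + R_{ii})$ yields the required term-by-term bound. For the off-diagonal part, one must compare the Daleckii--Krein kernel against the expansions of $\Tr_0[A(\Gamma^0) R^2]$ and of $\Tr[\Gamma(1-\Gamma) - \Gamma^0(1-\Gamma^0)]^2$; the latter, using the identity $\Gamma(1-\Gamma) - \Gamma^0(1-\Gamma^0) = R - \Gamma^0 R - R\Gamma^0 - R^2$, produces both quadratic and quartic contributions in $R$ that must be tracked carefully.

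The main obstacle is precisely this off-diagonal bookkeeping. Because $\tfrac{4}{3}$ is sharp at the symmetry point, no slack is available in the kernel comparison: one must match the full scalar inequality (not just its second-order Taylor expansion) against the Daleckii--Krein kernel, and simultaneously keep the quartic cross terms arising from the off-diagonal part of $\Gamma(1-\Gamma) - \Gamma^0(1-\Gamma^0)$ compatible with the prefactor $\tfrac{4}{3}$. Matrix elements with $\gamma^0_i$ near $0$ or $1$ --- where $A(\gamma^0_i)$ diverges but is tempered by the smallness of $\gamma^0_i(1 - \gamma^0_i)$ --- will require the most careful handling. Once this pairwise comparison is in place, summing via $\Tr_0$ over $(i, j)$ delivers \eqref{eq:lem:klein}.
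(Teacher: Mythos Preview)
The paper does not prove this lemma; it simply cites \cite[Lemma~1]{FHSS}. So there is no ``paper's own proof'' to compare against here. That said, your outline deserves comment on its merits.

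Your scalar inequality is correct and is indeed the heart of the matter. But the operator lift is far simpler than your Daleckii--Krein plan suggests, and the ``off-diagonal bookkeeping'' you flag as the main obstacle is a phantom. Diagonalize \emph{both} operators: write $\Gamma=\sum_j a_j|j\rangle\langle j|$ and $\Gamma^0=\sum_i b_i|i\rangle\langle i|$, and set $c_{ij}=\langle i|j\rangle$, so that $\sum_i|c_{ij}|^2=\sum_j|c_{ij}|^2=1$. Using these resolutions of identity one checks in two lines that
\[
\H_0(\Gamma,\Gamma^0)=\sum_{i,j}|c_{ij}|^2\bigl[f(a_j)-f(b_i)-f'(b_i)(a_j-b_i)\bigr],
\]
and, with $p(t)=t(1-t)$,
\[
\Tr_0\!\left[\tfrac{H^0}{\tanh(H^0/2)}(\Gamma-\Gamma^0)^2\right]=\sum_{i,j}|c_{ij}|^2 A(b_i)(a_j-b_i)^2,
\qquad
\Tr\bigl[p(\Gamma)-p(\Gamma^0)\bigr]^2=\sum_{i,j}|c_{ij}|^2\bigl(p(a_j)-p(b_i)\bigr)^2.
\]
(For the first of these, note $(\Gamma-\Gamma^0)|i\rangle=\sum_j(a_j-b_i)\overline{c_{ij}}|j\rangle$; for the second, expand $\Tr S^2$ with $S=p(\Gamma)-p(\Gamma^0)$.) Thus \eqref{eq:lem:klein} follows \emph{term by term} from your scalar inequality applied to each pair $(x,y)=(b_i,a_j)$, with weights $|c_{ij}|^2$. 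No double-operator integrals, no kernel comparison, no slack needed despite the sharp $\tfrac43$.

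In short: your Step~1 (the pointwise scalar bound) is exactly right and carries the whole proof; your Step~2 is the wrong tool and generates difficulties that do not actually exist. Replace the Daleckii--Krein machinery by the joint spectral decomposition above and the argument closes immediately.
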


%%%%%%%%%%%%%%%%%%%%%%%%%%%%%%%%%

\subsection{Semi-classics}\label{sec:semi}

One of the key ingredients in the proof of Theorem~\ref{main} is
semiclassical analysis.  For any $\psi\in H^2_\per(\R^d)$ and any `sufficiently regular' function $t$ on $\R^d$ let $\Delta$ be
the operator
\begin{equation}\label{def:delta}
  \Delta = - \frac h 2\left(\psi(x) t(-ih\nabla) + t(-ih\nabla) \psi(x)\right) \,. 
\end{equation}
It has the integral kernel
\begin{equation}
  \Delta(x,y) = -\frac{h^{1-d}}{2(2\pi)^{d/2}} \left( \psi( x) + \psi( y) \right) \check{t}(h^{-1}(x-y)) \,.
\end{equation}
Our convention for the Fourier transform is that $\widehat f(p) =
(2\pi)^{-d/2} \int_{\R^d} f(x) e^{-ip\cdot x}\,dx$ and $\check g(x) = (2\pi)^{-d/2} \int_{\R^d} g(p) e^{ip\cdot x}\,dp$. By `sufficiently regular' we mean that
\begin{equation}\label{t:as1}
  \partial^\gamma t \in L^{2p/(p-1)}(\R^d) 
\end{equation}
with $p$ from Assumption~\ref{ass1} and that
\begin{equation}\label{t:as2}
  \int_{\R^d} \frac{|\partial^\gamma t(q)|^2}{ 1+q^2}\, dq < \infty
  \qquad\text{for all}\ \gamma \in \{0,1\,\dots,4\}^d \,.
\end{equation}
For simplicity, we also
assume that $t$ is reflection-symmetric and real-valued. For the
function $t_*$ in \eqref{eq:deft}, these assumptions are satisfied, as shown in \cite[App. A]{FHSS}.

Let $H_\Delta$ be the operator \eqref{eq:hdelta} on $L^2(\R^d)\otimes \C^2$, with $A$ and $W$ satisfying
Assumption~\ref{ass1}.  In the following, we will investigate the trace
per unit volume of functions of $H_\Delta$. Specifically, we are
interested in the effect of the off-diagonal term $\Delta$ in
$H_\Delta$, in the semiclassical regime of small $h$. The functions of $H_\Delta$ we are considering are not actually locally trace class, in general, but their diagonal entries are; see the discussion in \cite[Sec. 4]{FHSS}. Therefore we need to use the regularized trace $\tr_0$ from \eqref{deftrs}.

In order to state our theorem about semi-classical asymptotics, we introduce the functions
\begin{equation}\label{deffg0}
    f(z) = -   \ln \left(1+ e^{-z}\right)
\qquad\text{and}\qquad
    g_0(z) = \frac{f'(-z) - f'(z)}{z} = \frac { \tanh\left(\tfrac 12 z\right)}{z}\,,
  \end{equation}
We also recall the definition of the functions $g_1$ and $g_2$ in \eqref{eq:defg12}. They are related to the functions $f$ and $g_0$ by 
\begin{equation}\label{defg1}
    g_1(z) = - g_0'(z) = \frac{f'(-z)-f'(z)}{z^2} + \frac{f''(-z)+f''(z)}{z} = \frac{ e^{2 z} - 2 z e^{z}-1}{z^2 (1+e^{z})^2}
  \end{equation}
  and
  \begin{equation}\label{defg2}
    g_2(z) =  g_1'(z) + \frac 2 z \,g_1(z) =  \frac{f'''(z)-f'''(-z)}{z} = \frac{2 e^{z} \left( e^{ z}-1\right)}{z \left(e^{z}+1\right)^3}\,.
  \end{equation}
These functions appear in the coefficients of the semi-classical expansion.
  
\begin{theorem}\label{thm:scl}
 If Assumption \ref{ass1} is satisfied, then, for any $\beta > 0$, the diagonal entries of the $2\times 2$
  matrix-valued operator $f(\beta H_\Delta) - f(\beta H_0)$ are
  locally trace class, and the sum of their traces per unit volume
  equals
  \begin{align}\notag
    \frac {h^{d}}{\beta}\, \Trs\left[ f(\beta H_\Delta) - f(\beta
      H_0)\right] & = h^2 E_1 + h^4 E_2 + O(h^{5}) \left(
      \|\psi\|^4_{H^1(\calC)} + \|\psi\|^2_{H^1(\calC)} \right) \\
    &\quad + O(h^6) \left( \|\psi\|_{H^1(\calC)}^6+
      \|\psi\|_{H^2(\calC)}^2\right)\,, \label{210}
  \end{align}
  where
  \begin{equation}\label{def:e1}
    E_1 =  -\frac {\beta} 2 \|\psi\|_2^2  
    \int_{\R^d} t(p)^2  \, g_0(\beta(p^2-\mu))\,  \frac{dp}{(2\pi)^d} 
  \end{equation}
  and
  \begin{align}\notag
    E_2 &= - \frac { \beta} 8 \sum_{j,k=1}^d \langle \partial_j
    \psi| \partial_k \psi\rangle \int_{\R^d} t(q)
    \left[ \partial_j \partial_k t\right]\!(q)\, g_0(\beta(q^2-\mu))\,
    \frac{dq}{(2\pi)^d} \\ \nonumber & \quad + \frac{ \beta^2}8
    \sum_{j,k=1}^d \langle (\partial_j + 2 i A_j) \psi| (\partial_k +
    2 iA_k) \psi \rangle \\ \nonumber & \qquad \qquad \quad \times \!
    \int_{\R^d} t(q)^2 \left( \delta_{jk} g_1(\beta(q^2-\mu)) + 2
      \beta
      q_j q_k\, g_2(\beta(q^2-\mu)) \right) \frac{dq}{(2\pi)^d} \\
    \nonumber & \quad + \frac {\beta^2 }2 \langle \psi|
    W|\psi\rangle\int_{\R^d} t(q)^2 \, g_1(\beta(q^2-\mu)) \,
    \frac{dq}{(2\pi)^d} \\ & \quad + \frac {\beta^2} 8 \|\psi\|_4^4
    \int_{\R^d} t(q)^4 \, \frac{g_1(\beta(q^2-\mu))}{q^2-\mu}\,
    \frac{dq}{(2\pi)^d} \,. \label{def:e2}
  \end{align}
  The error terms in (\ref{210}) of order $h^5$ and $h^6$ are bounded
  uniformly with respect to $\beta$ for $\beta$ in compact intervals of
  $(0,\infty)$. They depend on $t$ only via upper bounds on the expressions
  (\ref{t:as1}) and (\ref{t:as2}).
\end{theorem}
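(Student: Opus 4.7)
The strategy is to treat $V := H_\Delta - H_0 = \bigl(\begin{smallmatrix} 0 & \Delta \\ \overline\Delta & 0 \end{smallmatrix}\bigr)$ as a perturbation whose entries are of order $h$, and to expand
\[
f(\beta H_\Delta) - f(\beta H_0) = -\frac{1}{2\pi i}\oint_{\mathcal P} f(\beta z)\Bigl[(z-H_\Delta)^{-1} - (z-H_0)^{-1}\Bigr]\,dz
\]
on a contour $\mathcal P$ encircling the spectrum, via iterated resolvent identities, as a sum $\sum_{n=1}^{5} T_n + R_6$ where each $T_n$ is $n$-linear in $V$. Because $(z-H_0)^{-1}$ is block-diagonal and $V$ is purely off-diagonal, $T_n$ has vanishing diagonal blocks for every odd $n$; hence $\Trs T_n = 0$ for odd $n$ and only $T_2$ and $T_4$ contribute to the stated precision. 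This already explains why the expansion runs in even powers of $h$, apart from the $O(h^5)$ error, which will come from subleading semi-classical corrections inside $T_2$ and $T_4$.

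For $T_2$, evaluating the contour integral produces an expression of the form $\beta\,\Trs\!\bigl[\Delta^* M_2(\mathfrak h_h, -\overline{\mathfrak h_h})\,\Delta\bigr]$ with $M_2$ a divided-difference kernel built from $g_0$. Using semiclassical Wigner--Weyl symbol calculus with the leading symbols $\mathfrak h_h(x,p)\approx p^2-\mu$ and $\Delta(x,p)\approx -h\,\psi(x)\,t(p)$, the trace per unit volume becomes a phase-space integral which, after multiplying by $h^d$, reproduces $h^2 E_1$ exactly. The coefficient $h^4 E_2$ then assembles from four algebraically distinct sources: (i) Moyal-type gradient corrections to the symbol of $\Delta^*\Delta$, giving the $\langle \partial_j\psi|\partial_k\psi\rangle$ term via $g_1=-g_0'$; (ii) the $h$-expansion of the conjugation between $\mathfrak h_h$ and $\overline{\mathfrak h_h}$ (which differ by the sign of $A$), producing the covariant combination $(\partial_j+2iA_j)$; (iii) insertion of the subleading $h^2 W$ from the symbol of $\mathfrak h_h$, yielding the $\langle \psi|W|\psi\rangle$ term; and (iv) the $T_4$ diagram with leading symbols, producing the quartic $\|\psi\|_4^4$ term. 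The identities (\ref{defg1})--(\ref{defg2}) between $g_0, g_1, g_2$ and their derivatives, together with integration by parts in $p$ to move $p$-derivatives off $t$ and onto $g_0(\beta(p^2-\mu))$, are what repackage the raw outputs into the precise form (\ref{def:e2}).

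Remainders are controlled by Kato--Seiler--Simon inequalities of the form $\|\psi(x)\,g(-ih\nabla)\|_p \lesssim h^{-d/p}\|\psi\|_{L^p(\calC)}\|g\|_{L^p(\R^d)}$, converting each factor of $V$ in the perturbation series into a quantitative power of $h$ while absorbing $\psi$ in $\|\psi\|_{H^1(\calC)}$ or $\|\psi\|_{H^2(\calC)}$ and absorbing $t$-dependence via (\ref{t:as1})--(\ref{t:as2}). The $O(h^6)$ remainder collects $T_6$ together with the next even-order semiclassical correction inside $T_2$ and $T_4$; the $O(h^5)$ remainder is the leading odd-order semiclassical correction, which appears at $h^5$ rather than $h^3$ because the $h^3$ coefficient cancels thanks to the reflection symmetries $t(-p) = t(p)$ and $p\mapsto -p$ of the leading symbol of $H_0$. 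Uniformity in $\beta$ on compact subsets of $(0,\infty)$ is automatic from the analyticity of $f$ and the bounded resolvent estimates on $\mathcal P$.

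The hard part is the bookkeeping for $E_2$. Four algebraically distinct contributions must collapse into the single compact expression (\ref{def:e2}); achieving this requires repeated integration by parts in $p$ and systematic use of $g_1 = -g_0'$ and $g_2 = g_1' + (2/z)\,g_1$ to rewrite every integrand uniformly against $g_0, g_1, g_2$, followed by symmetrization in $j, k$. A secondary persistent technical difficulty is that $f(\beta H_\Delta) - f(\beta H_0)$ itself is not locally trace class --- only its diagonal blocks are --- so one must work with $\Trs$ throughout and verify that the block-diagonal restriction commutes with each step of the contour and resolvent expansion.
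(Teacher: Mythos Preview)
The paper does not prove this theorem; it simply cites \cite[Theorem~2]{FHSS} for the proof. Your proposal correctly identifies the architecture of that proof: a contour-integral representation of $f(\beta H_\Delta)-f(\beta H_0)$, a resolvent (Neumann) expansion in the off-diagonal perturbation $V$, the observation that odd orders have no diagonal blocks and hence drop out of $\Trs$, explicit computation of the $n=2$ and $n=4$ contributions, and Schatten-norm remainder bounds of Kato--Seiler--Simon type controlled by the expressions (\ref{t:as1})--(\ref{t:as2}).

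Two points of detail are worth flagging. First, the contour is not a closed curve encircling the spectrum: since $f(z)=-\ln(1+e^{-z})$ has poles on the imaginary axis, \cite{FHSS} uses the pair of horizontal lines $\{\im z=\pm\pi/(2\beta)\}$ (the contour $\Upsilon$ appearing in \eqref{eq:eta1}), and one must check integrability along these infinite lines. Second, the actual extraction of $E_2$ in \cite{FHSS} proceeds not via Weyl calculus in the usual sense but by expanding the resolvents $(z-\mathfrak h_h)^{-1}$ around $(z-k_0)^{-1}$ with $k_0=-h^2\Delta-\mu$, treating $A$ and $W$ perturbatively, and then computing explicit commutators $[\psi,k_0]$; this is what produces the covariant derivatives $(\partial_j+2iA_j)$ and the $W$ term in the form stated. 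Your description in terms of ``Moyal-type gradient corrections'' is morally equivalent but is not how the computation is actually organized there.
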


The expressions $E_1$ and $E_2$ are the first two non-vanishing terms
in a semi-classical expansion of the left side of (\ref{210}). They can
be obtained, in principle, from well-known formulas in semiclassical
analysis \cite{helffer,robert}. The standard techniques are not
directly applicable in our case, however. This has to do, on the one
hand, with our rather minimal regularity assumptions on $W$, $A$,
$\psi$ and $t$ and, on the other hand, with the fact that we are
working with the trace per unit volume of an infinite, periodic
system. For the proof of Theorem \ref{thm:scl} we refer to \cite[Theorem 2]{FHSS}.

\medskip
Our second semi-classical estimate concerns the upper off-diagonal term
of $\Gamma_\Delta$ from \eqref{eq:gammadelta}. We shall be interested in
its $H^1$ norm, defined in \eqref{def:h1}. A proof of the following theorem can be found in \cite[Theorem 3]{FHSS}.

\begin{theorem}\label{lem3} 
If Assumption \ref{ass1} is satisfied, then, with the notation $\alpha_\Delta=(\Gamma_\Delta)_{12}$,
  \begin{equation}\label{eq:lem3}
    \left\| \alpha_\Delta - \tfrac h2 \left(\psi(x) \varphi(-ih\nabla) + \varphi(-ih\nabla) \psi(x) \right) \right\|_{H^1}  \lesssim h^{3-d/2} \left(  \|\psi\|_{H^2(\calC)}  + \|\psi\|^3_{H^1(\calC)}\right)\,,
  \end{equation}
where
\begin{equation}\label{def:varphi}
  \varphi(p) = \frac \beta 2 \,g_0(\beta (p^2-\mu)) \,t(p) \,.
\end{equation}
More precisely, one has
\begin{equation}
\label{eq:lem31}
\left\| \alpha_\Delta - \frac h2 \left(\psi(x) \varphi(-ih\nabla) + \varphi(-ih\nabla) \psi(x) \right) - \eta_1 \right\|_{H^1}
\lesssim h^{3-d/2} \left(  \|\psi\|_{H^1(\calC)}  + \|\psi\|^3_{H^1(\calC)}\right) \,,
\end{equation}
where
\begin{equation}
\label{eq:eta1}
\eta_1 = \frac{h}{4\pi i} \int_\Upsilon \left( \frac{1}{z-k_0} [\psi,k_0] \frac{t(-ih\nabla)}{z^2-k_0^2} + \frac{t(-ih\nabla)}{z^2-k_0^2} [\psi,k_0] \frac{1}{z+k_0} \right) \frac{dz}{1+e^{\beta z}}
\end{equation}
and
\begin{equation}
\label{eq:lem32}
\left\| \eta_1 \right\|_{H^1} \lesssim h^{3-d/2} \|\psi\|_{H^2(\calC)}\,.
\end{equation}
In \eqref{eq:eta1}, $\Upsilon$ denotes the contour $\{\im z = \pm \pi/(2\beta)\}$ and $k_0 = -\nabla^2 -\mu$. The constants in \eqref{eq:lem3}, \eqref{eq:lem31} and \eqref{eq:lem32} are bounded uniformly in $\beta$ for $\beta$ in compact intervals in $(0,\infty)$. They depend on $t$ only via upper bounds on the expressions (\ref{t:as1}) and (\ref{t:as2}).
\end{theorem}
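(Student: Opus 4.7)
\textbf{Proof proposal for Theorem~\ref{lem3}.}
The natural starting point is the Cauchy representation
\[
\Gamma_\Delta = \frac{1}{1+e^{\beta H_\Delta}} = \frac{1}{2\pi i}\int_\Upsilon \frac{1}{z-H_\Delta}\,\frac{dz}{1+e^{\beta z}},
\]
with the contour $\Upsilon=\{\im z=\pm\pi/(2\beta)\}$ enclosing the real spectrum of $H_\Delta$ and lying inside the first Matsubara strip. I would write $H_\Delta=H_0+V_\Delta$ with $V_\Delta=\bigl(\begin{smallmatrix}0&\Delta\\ \overline\Delta&0\end{smallmatrix}\bigr)$ and iterate the second resolvent identity. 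Since $H_0$ is block-diagonal, the $(1,2)$ block of the zeroth-order piece vanishes, the linear contribution is
\[
\alpha^{(1)}_\Delta=\frac{1}{2\pi i}\int_\Upsilon \frac{1}{z-\mathfrak h}\,\Delta\,\frac{1}{z+\overline{\mathfrak h}}\,\frac{dz}{1+e^{\beta z}},
\]
and the remainder $\alpha_\Delta-\alpha^{(1)}_\Delta$ is effectively cubic in $V_\Delta$ (two explicit factors of $V_\Delta$ multiplied by the off-diagonal block of $(z-H_\Delta)^{-1}$, itself $O(\|V_\Delta\|)$).

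The key step is to simplify $\alpha^{(1)}_\Delta$ by replacing $\mathfrak h$ and $\overline{\mathfrak h}$ with the free operator $k_0$. A further round of the second resolvent identity peels off the contributions of $hA$ and $h^2W$; combined with the Kato--Seiler--Simon type trace-ideal bounds for the trace per unit volume from \cite[Sec.~3]{FHSS}, each such replacement costs at most $O(h^{3-d/2}\|\psi\|_{H^1(\calC)})$ in the $H^1$ norm. In the resulting free integrand $(z-k_0)^{-1}\Delta(z+k_0)^{-1}$, the Fourier multiplier $t(-ih\nabla)$ appearing in $\Delta=-\tfrac h2(\psi\,t(-ih\nabla)+t(-ih\nabla)\,\psi)$ commutes with every function of $k_0$; therefore, using the identity $(z\mp k_0)^{-1}\psi=\psi(z\mp k_0)^{-1}\mp (z\mp k_0)^{-1}[\psi,k_0](z\mp k_0)^{-1}$, the integrand splits into a symmetric ``product'' piece $-\tfrac h2\bigl(\psi\,\tfrac{t(-ih\nabla)}{z^2-k_0^2}+\tfrac{t(-ih\nabla)}{z^2-k_0^2}\,\psi\bigr)$ plus exactly the two commutator terms that, after integration in $z$, assemble into $\eta_1$ from \eqref{eq:eta1}. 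Evaluating the scalar residue integral $\frac{1}{2\pi i}\int_\Upsilon\frac{dz}{(z-E)(z+E)(1+e^{\beta z})}=-\tfrac{\beta}{2}g_0(\beta E)$ at the symbol $E=p^2-\mu$ of $k_0$ then identifies the leading operator with $\tfrac h2(\psi\,\varphi(-ih\nabla)+\varphi(-ih\nabla)\,\psi)$, in view of \eqref{def:varphi}.

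Three error estimates remain. The cubic remainder is controlled by combining $\|\Delta\|_{H^1}\lesssim h^{1-d/2}\|\psi\|_{H^1(\calC)}$ with H\"older's inequality on the trace per unit volume applied uniformly in $z\in\Upsilon$, which yields $\lesssim h^{3-d/2}\|\psi\|_{H^1(\calC)}^3$; the $A,W$-error was handled above, with the same order. For $\eta_1$ itself I observe that $[\psi,k_0]$ involves at most two derivatives of $\psi$ (one of which may fall on a resolvent, which remains bounded along $\Upsilon$), so the explicit $h$ in \eqref{eq:eta1} together with the semiclassical density factor $h^{-d/2}$ gives $\|\eta_1\|_{H^1}\lesssim h^{3-d/2}\|\psi\|_{H^2(\calC)}$, which is \eqref{eq:lem32}. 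Keeping $\eta_1$ explicit yields the refined bound \eqref{eq:lem31}, and absorbing it into the right-hand side gives \eqref{eq:lem3}.

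I expect the main obstacle to be not the algebra---which is dictated by the residue calculus above---but the uniform control of all error terms in the $H^1$ norm, on compact intervals of $\beta\in(0,\infty)$ and under only the minimal integrability hypotheses on $V,W,A,t$ in Assumption~\ref{ass1} and \eqref{t:as1}--\eqref{t:as2}. Smooth pseudodifferential calculus is not available, so one must carry out a careful sequence of Kato--Seiler--Simon type estimates for products of resolvents of $k_0$ with multiplication operators and Fourier multipliers, uniformly along $\Upsilon$; this trace-ideal bookkeeping is the technical heart of the proof.
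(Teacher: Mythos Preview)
Your proposal is correct and follows essentially the same route as the proof in \cite[Theorem~3]{FHSS}, to which the present paper simply refers: the Cauchy representation on the contour $\Upsilon$, the resolvent expansion of $(z-H_\Delta)^{-1}$ in powers of the off-diagonal perturbation $V_\Delta$ (so that the quadratic term has vanishing $(1,2)$ block and the remainder is cubic), the replacement of $\mathfrak h,\overline{\mathfrak h}$ by $k_0$ via a further resolvent identity, and the commutator bookkeeping that isolates $\eta_1$ and leaves the symmetrized $\varphi$-term after the residue computation. Your diagnosis that the real work lies in the uniform Kato--Seiler--Simon type estimates along $\Upsilon$ under the minimal regularity assumptions is also exactly right.
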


In order to appreciate the bound of Theorem \ref{lem3} one should note that
$$
\left\| \tfrac h2 \left(\psi(x) \varphi(-ih\nabla) + \varphi(-ih\nabla) \psi(x) \right) \right\|_{H^1}  \eqsim h^{1-d/2} \,.
$$

%%%%%%%%%%%%%%%%%%%%%%%%%%%%%%%%%%%%%%%%%%%%

\section{A priori bounds on the critical temperature}\label{sec:apriori}

In this section we prove Proposition \ref{crittemplimit}. We always work under Assumption \ref{ass1}.

\subsection{Lower bound on $\underline{T_c(h)}$}

In this subsection we shall prove

\begin{proposition}\label{apriorilower}
As $h\to 0$, $\underline{T_c(h)} \geq T_c (1 - o(1))$. 
\end{proposition}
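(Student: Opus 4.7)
The plan is to construct, for any $T_1 \in (0, T_c)$, an admissible trial state whose free energy lies strictly below the normal-state energy at every temperature $T \in (0, T_1]$, uniformly for small $h$, thereby obtaining $\underline{T_c(h)} \geq T_1$ and, in the limit $T_1 \uparrow T_c$, the claim. The trial state I would use is a translation-invariant BCS state from the functional \emph{without} external fields, supplied by the results of \cite{HHSS}.

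Concretely, fix $T_1 \in (0, T_c)$. By the translation-invariant analogue of Proposition \ref{crittemplimit} proved in \cite{HHSS} (compare Remark (2) after Proposition \ref{crittemplimit}), for every $T \in (0, T_1]$ there is an admissible translation-invariant state $\Gamma_T^{\mathrm{TI}}$ with
\begin{equation*}
\mathcal{F}^{(0)}_{T,h}(\Gamma_T^{\mathrm{TI}}) - F^{(0),\mathrm{TI}}_{T,h} \leq -\kappa\, h^{-d},
\end{equation*}
where $\mathcal{F}^{(0)}_{T,h}$ denotes the BCS functional with $W \equiv 0$ and $A \equiv 0$, $F^{(0),\mathrm{TI}}_{T,h}$ is the corresponding translation-invariant normal-state energy, and $\kappa = \kappa(T_1) > 0$ can be arranged uniform in $T \in (0, T_1]$. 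The factor $h^{-d}$ is produced by the rescaling $x \mapsto x/h$, which maps $\mathcal{F}^{(0)}_{T,h}$ on translation-invariant states to the $h=1$ functional times $h^{-d}$; uniformity of $\kappa$ in $T$ comes from continuity and monotonicity of the translation-invariant BCS gap on $[0, T_c)$.

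Next I would use $\Gamma_T^{\mathrm{TI}}$ as a trial state in the full functional. Since the entropy and interaction terms are independent of $W$ and $A$,
\begin{equation*}
\mathcal{F}_{T,h}(\Gamma_T^{\mathrm{TI}}) - \mathcal{F}^{(0)}_{T,h}(\Gamma_T^{\mathrm{TI}}) = \Tr\bigl[(\mathfrak h_h - \mathfrak h_h^{(0)})\gamma_T^{\mathrm{TI}}\bigr],
\end{equation*}
with $\mathfrak h_h^{(0)} = -h^2\Delta - \mu$ and $\mathfrak h_h - \mathfrak h_h^{(0)} = h^2(W+A^2) - ih^2(\nabla\cdot A + 2A\cdot\nabla)$. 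For the real, reflection-symmetric, translation-invariant $\gamma_T^{\mathrm{TI}}$ both terms involving $\nabla$ vanish (the first by periodicity of $A$, the second because $\nabla\gamma_T^{\mathrm{TI}}(0) = 0$), leaving $h^2\,\gamma_T^{\mathrm{TI}}(0,0)\int_\mathcal{C}(W+A^2)\,dx = O(h^{2-d})$, using that $\gamma_T^{\mathrm{TI}}(0,0) = O(h^{-d})$. An analogous Duhamel expansion of $\ln(1+e^{-\beta(\cdot)})$ along the segment $s \mapsto \mathfrak h_h^{(0)} + s(\mathfrak h_h - \mathfrak h_h^{(0)})$ gives the same order $O(h^{2-d})$ for $F^{(0)}_{T,h} - F^{(0),\mathrm{TI}}_{T,h}$.

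Putting these together,
\begin{equation*}
\mathcal{F}_{T,h}(\Gamma_T^{\mathrm{TI}}) - F^{(0)}_{T,h} \leq -\kappa\, h^{-d} + O(h^{2-d}),
\end{equation*}
which is strictly negative for $h$ sufficiently small, uniformly in $T \in (0, T_1]$; hence $\underline{T_c(h)} \geq T_1$ for such $h$, and letting $T_1 \uparrow T_c$ finishes the argument. The main obstacle will be controlling, uniformly in $T \in (0, T_1]$ (especially as $T \to 0^+$), both the translation-invariant gap $\kappa$ and the density $\gamma_T^{\mathrm{TI}}(0,0)$ entering the error term. This should follow from the explicit construction of the translation-invariant minimizer in \cite{HHSS}, together with continuity of that minimizer down to $T = 0$ and the standard monotonicity of the BCS gap on $[0, T_c)$.
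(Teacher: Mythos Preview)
Your approach is essentially the same as the paper's: both use the translation-invariant minimizer from \cite{HHSS} as a trial state, exploit reflection symmetry to kill the linear-in-$A$ contribution, and show that the remaining external-field corrections to both the trial energy and the normal-state energy are $o(h^{-d})$. The only differences are technical: the paper secures the uniform gap $\kappa$ on $[0,T_0]$ via concavity of $T\mapsto\inf_{\tilde\Gamma}\tilde{\mathcal F}_T(\tilde\Gamma)$ (rather than invoking monotonicity of the BCS gap), obtains the uniform density bound from the explicit estimate $\sup_{0\le T<T_c}\int(p^2+1)\widehat{\tilde\gamma_T}(p)\,dp<\infty$, and replaces your Duhamel argument for $F^{(0)}_{T,h}-F^{(0),\mathrm{TI}}_{T,h}$ by the simpler operator inequality $\mathfrak h_h\ge(1-\epsilon)(-ih\nabla)^2-C\epsilon^{-1}h^2-\mu$ with $\epsilon=h$, which gives $O(h^{1-d})$ rather than your $O(h^{2-d})$ but is manifestly uniform down to $T=0$.
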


\begin{proof}
Clearly, we may assume that $T_c>0$. We shall show that for every $0<T_0<T_c$ there is a constant $C$, depending on $\|W\|_\infty$, $\|A\|_{C^1}$ and $T_c-T_0$ such that for all $0\leq T\leq T_0$ and all sufficiently small $h>0$,
$$
\inf_\Gamma \mathcal F_{T,h}(\Gamma) < F_{T,h}^{(0)} \,.
$$
We shall construct a translation-invariant trial state $\Gamma$. In order to do so, let us consider the functional
\begin{equation}
\label{eq:fti}
\tilde{\mathcal F}_T(\tilde\Gamma) = \int_{\R^d} \left( \left(p^2-\mu\right)\widehat{\tilde\gamma}(p) - T s\left(\widehat{\tilde\Gamma}(p)\right) \right) dp + \int_{\R^d} V(x) |\tilde\alpha(x)|^2\,dx \,,
\end{equation}
defined for matrix-valued functions $\tilde\Gamma$ on $\R^d$ of the form
$$
\tilde\Gamma = \begin{pmatrix}
\tilde\gamma & \tilde\alpha \\
\overline{\tilde\alpha} & 1- \overline{\tilde\gamma}
\end{pmatrix} \,,
$$
where $\tilde\gamma$ and $\tilde\alpha$ satisfy $\tilde\gamma(-x)=\overline{\tilde\gamma(x)}$ and $\tilde\alpha(-x)=\tilde\alpha(x)$ for all $x\in\R^d$ and their Fourier transforms satisfy for all $p\in\R^d$
$$
\left|\widehat{\tilde\alpha}(p)\right|^2 \leq \widehat{\tilde\gamma}(p) \left( 1- \widehat{\tilde\gamma}(p)\right) \,.
$$
In \eqref{eq:fti}, we used the notation
$$
s\left(\widehat{\tilde\Gamma}(p)\right) = - \tr_{\C^2} \widehat{\tilde\Gamma}(p)\ln \widehat{\tilde\Gamma}(p) \,.
$$
Let us set
$$
\tilde F_T^{(0)} = - \frac{1}{\beta} \int_{\R^d} \ln\left( 1+ e^{-\beta(p^2-\mu)}\right)dp \,.
$$
(This is the infimum when $\tilde{\mathcal F}_T(\tilde\Gamma)$ is minimized over $\tilde\Gamma$'s with $\tilde\alpha\equiv 0$.) In \cite{HHSS} it is shown that for any $0\leq T<T_c$ one has $\inf_{\tilde\Gamma} \tilde{\mathcal F}_T(\tilde\Gamma) < \tilde F_T^{(0)}$ and $\tilde{\mathcal F}_T$ has a minimizer $\tilde\Gamma_T$ with $\tilde\alpha_T\not\equiv 0$.\footnote{The analysis in \cite{FHSS} extends easily to dimensions $d=1,2$ and to the reflection-symmetry constraints imposed above.} We claim that
\begin{equation}
\label{eq:tibounded}
 \sup_{0\leq T<T_c} \int_{\R^d} (p^2+1) \widehat{\tilde\gamma_T}(p)\,dp <\infty \,.
\end{equation}
In fact, this follows from \cite[Eq. (3.1)]{HHSS} (and its immediate extension to $d=1,2$).

We now use $\tilde\Gamma_T$ to construct a trial state for the non-translation-invariant functional $\mathcal F_{T,h}$. We set
$$
\Gamma_T = \widehat{\tilde\Gamma_T}(-ih\nabla) \,,
\qquad\text{that is,}\qquad
\Gamma_T(x,y) = h^{-d}\, (2\pi)^{-d/2}\, \tilde\Gamma_T(h^{-1}(x-y)) \,.
$$
This state is clearly admissible and we find
\begin{align}
\label{eq:apriorilower1proof}
\mathcal F_{T,h}(\Gamma_T) & = (2\pi h)^{-d} \left( \tilde{\mathcal F}_T(\tilde\Gamma_T) + h^2 \int_{\R^d} \widehat{\tilde\gamma_T}(p)\,dp\  \int_\mathcal C \left( A^2+W\right)dx \right) \notag \\
& \leq  (2\pi h)^{-d} \left( \tilde{\mathcal F}_T(\tilde\Gamma_T) + C h^2 \right) \,.
\end{align}
There is no linear term in $A$ since $\int_{\R^d} p\, \widehat{\tilde\gamma_T}(p)\,dp$ is well-defined and zero by \eqref{eq:tibounded} and the reflection-symmetry of $\widehat{\tilde\gamma_T}$. The inequality in \eqref{eq:apriorilower1proof} comes from \eqref{eq:tibounded} and Assumption \ref{ass1}.

As an infimum over affine functions, $T\mapsto\inf_{\tilde\Gamma} \tilde{\mathcal F}_T(\tilde\Gamma)$ is concave and, since it is bounded from below for $T=0$ (see \cite{HHSS}) and from above at $T=T_c$ (in fact, there it is equal to $\tilde F_{T_c}^{(0)}$), it is continuous on $[0,T_c]$. Clearly $T\mapsto\tilde F_T^{(0)}$ is continuous on $[0,T_c]$ as well and so by compactness, for every $T_0<T_c$ there is an $\delta>0$ such that
\begin{equation}
\label{eq:apriorilower1proof1}
\tilde{\mathcal F}_T(\tilde\Gamma_T) = \inf_{\tilde\Gamma} \tilde{\mathcal F}_T(\tilde\Gamma) \leq \tilde F_T^{(0)} - \delta
\qquad\text{for all}\ 0\leq T\leq T_0 \,.
\end{equation}

We are now going to show that there is a constant $C>0$ such that for all $0\leq T\leq T_c$ and all sufficiently small $h>0$,
\begin{equation}
\label{eq:apriorilower1proof2}
\tr \ln\left( 1+ e^{-\beta\mathfrak h_h}\right) \leq (2\pi h)^{-d} \left(\int_{\R^d} \ln\left( 1+e^{-\beta(p^2-\mu)}\right)dp + C h \right) \,,
\end{equation}
That is,
$$
(2\pi h)^{-d} \tilde F_T^{(0)} \leq F_{T,h}^{(0)} + C h^{-d+1} \,.
$$
Combining this with \eqref{eq:apriorilower1proof} and \eqref{eq:apriorilower1proof1} we obtain
$$
\mathcal F_{T,h}(\Gamma_T) \leq F_{T,h}^{(0)} - (2\pi h)^{-d} \left( \delta - C h \right) 
\qquad\text{for all}\qquad 0\leq T\leq T_0 \,. 
$$
The right side is strictly less than $F_{T,h}^{(0)}$ for $h<\delta/C$, as claimed in the proposition.

Thus, it remains to prove \eqref{eq:apriorilower1proof2}. By the Schwarz inequality and Assumption \ref{ass1} we have, for any $0<\epsilon\leq 1$,
$$
\mathfrak h_h \geq (1-\epsilon) (-ih\nabla)^2 + h^2 \left( \left(1-\epsilon^{-1}\right) A^2 + W \right) - \mu
\geq (1-\epsilon)(-ih\nabla)^2 -h^2 \epsilon^{-1} C - \mu \,.
$$
Thus,
\begin{align*}
\tr \ln\left( 1+ e^{-\beta\mathfrak h_h}\right) & \leq \tr \ln\left( 1+ e^{-\beta \left( (1-\epsilon)(-ih\nabla)^2 -h^2 \epsilon^{-1} C - \mu \right)} \right) \\
& = \int_{\R^d} \ln\left(  1+ e^{-\beta \left( (1-\epsilon)(hp)^2 -h^2 \epsilon^{-1} C - \mu \right)} \right) \frac{dp}{(2\pi)^d} \\
& = (2\pi h)^{-d} \int_{\R^d} \ln\left(  1+ e^{-\beta \left( (1-\epsilon) p^2 -h^2 \epsilon^{-1} C - \mu \right)} \right) \,dp  \,.
\end{align*}
It is easy to see that with the choice $\epsilon=h$ the last integral is bounded from above by $\int \ln\left( 1+e^{-\beta(p^2-\mu)}\right)dp + C h$. This proves \eqref{eq:apriorilower1proof2} and finishes the proof of the lemma.
\end{proof}

\begin{remark}
One can show that $\inf_{\tilde\Gamma} \tilde{\mathcal F}_T(\tilde\Gamma) \leq \tilde F_T^{(0)} - c(T-T_c)_-^2$ for some $c>0$, so the previous proof actually gives $\underline{T_c(h)}\geq T_c\left(1-Ch^{1/2}\right)$.
\end{remark}

\begin{remark}\label{apriorilower2}
Using a translation invariant trial state of the form \eqref{eq:gammadelta} with $\Delta= -h t_*(-ih\nabla)$ (where $t_*$ was defined in \eqref{eq:deft}) one can show that there are constants $C>0$ and $T_0\in (0,T_c)$, depending on $\|W\|_\infty$ and $\|A\|_{C^1}$, such that for all $T_0\leq T\leq T_c(1-C h^2)$ and all sufficiently small $h>0$,
$$
\inf_\Gamma \mathcal F_{T,h}(\Gamma) < F_{T,h}^{(0)} \,.
$$
This, together with Proposition \ref{apriorilower} implies the optimal bound $\underline{T_c(h)}\geq T_c(1-C h^2)$. We emphasize that this proof does not use Assumption \ref{ass1}. Since the proof uses similar arguments as in Section \ref{sec:mainlower}, we omit it.
\end{remark}

%%%%%%%%%%%%%%%%%%%%%%%%%%%%%%%%%%%%%%%%%%%%

\subsection{Upper bound}

Our goal in this subsection is to prove the following

\begin{proposition}\label{aprioriupper}
There is a constant $C$, depending on $\|W\|_\infty,$  and $\|A\|_{C^1}$, such that for all sufficiently small $h>0$,
$$
\overline{T_c(h)} \leq T_c (1 + C h^2) \,.
$$ 
\end{proposition}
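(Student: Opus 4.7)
The plan is to apply Lemma~\ref{lem:id} with $\Delta = 0$ together with Klein's inequality (Lemma~\ref{lem:klein}) in order to reduce the proposition to the non-negativity of a self-adjoint operator on Cooper-pair wavefunctions, and then to compare that operator to its translation-invariant, field-free counterpart.

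Setting $\Delta = 0$ in \eqref{fi} (so $\tilde\alpha \equiv 0$, $\Gamma_\Delta = \Gamma_0$, and the first trace on the right-hand side vanishes) yields
$$
\mathcal F_{T,h}(\Gamma) - F_{T,h}^{(0)} \;=\; \tfrac{T}{2}\mathcal H_0(\Gamma, \Gamma_0) \;+\; \iint_{\calC \times \R^d} V(h^{-1}(x-y))|\alpha(x,y)|^2 \,dx\,dy\,.
$$
Lemma~\ref{lem:klein}, combined with the block-diagonal form $H_0 = \mathrm{diag}(\mathfrak h_h, -\overline{\mathfrak h_h})$ and the identity $\tfrac{T}{2}\cdot\tfrac{\beta E}{\tanh(\beta E/2)} = \tfrac{1}{2}K_T(E)$, then produces (after discarding the non-negative $\gamma-\gamma_0$ contributions)
$$
\mathcal F_{T,h}(\Gamma) - F_{T,h}^{(0)} \;\geq\; \tfrac{1}{2}\langle \alpha, \mathcal L_{T,h}\alpha\rangle\,,\qquad
\mathcal L_{T,h} := K_T(\mathfrak h_h)\otimes \mathbf{1} + \mathbf{1}\otimes K_T(\overline{\mathfrak h_h}) + 2V(h^{-1}(x-y))\,,
$$
acting on symmetric periodic kernels. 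Since $\mathcal H_0(\Gamma, \Gamma_0) > 0$ whenever $\Gamma \neq \Gamma_0$ and $\mathcal L_{T,h}$ is monotone in $T$, it suffices to prove $\mathcal L_{T,h} \geq 0$ at $T = T_c(1 + Ch^2)$ for some $C$ depending on $\|W\|_\infty$ and $\|A\|_{C^1}$.

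To verify this, I compare $\mathcal L_{T,h}$ with its field-free counterpart $\mathcal L^0_{T,h}$ obtained by setting $A \equiv W \equiv 0$, so $\mathfrak h_h$ and $\overline{\mathfrak h_h}$ both reduce to $\mathfrak h_h^0 := -h^2\Delta - \mu$. Translation invariance of $\mathcal L^0_{T,h}$ in the center-of-mass coordinate $X = (x+y)/2$ gives a direct-integral decomposition over the COM momentum $P$; after rescaling to the microscopic relative coordinate $\tilde r = (x-y)/h$, the $P = 0$ fiber is exactly $2[K_T(-\nabla_{\tilde r}^2 - \mu) + V(\tilde r)]$ on $L^2_\symm(\R^d)$, which is $\geq 0$ precisely when $T \geq T_c$ by Proposition~\ref{crittemplimit}; the fact that the $P \neq 0$ fibers dominate the $P=0$ one (Cooper pairs condense at zero total momentum) is contained in the analysis of \cite{HHSS}. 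Hence $\mathcal L^0_{T,h} \geq 0$ for $T \geq T_c$, and Assumption~\ref{ass2} together with standard perturbation theory around the simple zero mode $\alpha_*$ upgrade this to a quantitative spectral gap $\mathcal L^0_{T,h} \geq c(T - T_c)/T_c$ on $[T_c, 2T_c]$, with $c > 0$ depending only on $V$ and $\mu$.

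Finally, the field perturbation $\mathfrak h_h - \mathfrak h_h^0 = h^2\bigl(W + A^2 - i\{A,\nabla\}\bigr)$ gives, via a contour-integral representation of $K_T$ and a first-order resolvent expansion on Cooper-pair wavefunctions whose COM factor is bounded in $H^1(\calC)$, an estimate of the form
$$
|\langle\alpha, (\mathcal L_{T,h} - \mathcal L^0_{T,h})\alpha\rangle| \;\leq\; C h^2 \langle\alpha, (1 + \mathcal L^0_{T,h})\alpha\rangle\,,
$$
with $C$ depending on $\|W\|_\infty$ and $\|A\|_{C^1}$. Choosing the constant in $T = T_c(1+Ch^2)$ larger than $C/c$ then makes the gap of $\mathcal L^0_{T,h}$ dominate the perturbation, giving $\mathcal L_{T,h} > 0$ as required. \emph{The main obstacle} is this perturbation estimate: naively $\mathfrak h_h - \mathfrak h_h^0$ is only $O(h)$ relative to $(\mathfrak h_h^0)^{1/2}$ (since the cross term $h^2 A\cdot \nabla$ equals $h\cdot A\cdot (h\nabla)$), which would yield only the weaker $T_c(1 + Ch)$. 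Improving to $O(h^2)$ requires exploiting that Cooper-pair wavefunctions carry COM derivatives at the macroscopic scale while the field-induced cross terms act at the microscopic scale --- a cancellation analogous to the one underlying the semiclassical expansion of Theorem~\ref{thm:scl}.
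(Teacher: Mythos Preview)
Your opening matches the paper: apply Lemma~\ref{lem:id} with $\Delta=0$, then Lemma~\ref{lem:klein}, drop the nonnegative $(\gamma-\gamma_0)^2$ contribution, and reduce to positivity of a quadratic form in $\alpha$ involving $K_T$ of the one-body Hamiltonian plus $V$. The gap estimate $K_T(-i\nabla)+V\geq c(T-T_c)$ by analytic perturbation at the isolated eigenvalue $0$ is also what the paper does (and for this one only needs that $0$ is isolated and of finite multiplicity, which follows from Assumption~\ref{ass1}; your appeal to Assumption~\ref{ass2} is unnecessary).

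The genuine gap is precisely what you flag as ``the main obstacle'': you assert
\[
|\langle\alpha,(\mathcal L_{T,h}-\mathcal L^0_{T,h})\alpha\rangle|\le Ch^2\langle\alpha,(1+\mathcal L^0_{T,h})\alpha\rangle
\]
but do not prove it, and your proposed mechanism --- exploiting that the center-of-mass factor of $\alpha$ is bounded in $H^1(\calC)$ --- is circular at this point. Nothing is known about $\alpha$ here beyond admissibility; the decomposition of $\alpha$ into a smooth COM factor times $\alpha_*$ plus a small remainder is Proposition~\ref{alphadecomp}, a \emph{later} result that already assumes $T=T_c(1-Dh^2)$ with $D$ bounded, so it cannot be fed back into the proof of the a~priori bound. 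For pairs with large COM momentum the magnetic cross term $h\cdot A\cdot(-ih\nabla)$ is genuinely only $O(h)$ relative to $K_T$, and nothing you have written rules such pairs out. The paper avoids this difficulty altogether: rather than a two-body COM decomposition, it rewrites the quadratic form as
\[
\int_{\calC}\bigl\langle\alpha(\cdot,y)\bigm|K_T^{A,W}+V\bigl(h^{-1}(\cdot-y)\bigr)\bigm|\alpha(\cdot,y)\bigr\rangle\,dy,
\]
a \emph{one}-variable operator for each fixed $y$, and then invokes the operator inequality
\[
K_T^{A,W}+V\bigl(h^{-1}(\cdot-y)\bigr)\ \geq\ \tfrac18\Bigl(K_T^{0,0}+V\bigl(h^{-1}(\cdot-y)\bigr)\Bigr)-C'h^2
\]
from \cite[Lemma~2]{FHSS}, valid on all of $L^2(\R^d)$ with no structural hypothesis on $\alpha$. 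That operator inequality is the nontrivial input replacing your unproved perturbation estimate, and it also removes any need to discuss the $P\neq 0$ fibers separately.
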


\begin{proof}
We recall that $T_c$ is defined in Proposition \ref{crittemplimit}. Clearly, for the proof we may assume that $T_c<\infty$. Then we need to show that there is a constant $C$ such that for all admissible $\Gamma\neq\Gamma_0$ and all sufficiently small $h>0$, we have
\begin{equation}
\label{eq:aprioriuppergoal}
\F_T(\Gamma) - \F_T(\Gamma_0) > 0
\qquad\text{for all}\ T> T_c(1+Ch^2) \,.
\end{equation}
We rewrite the left side using Lemma \ref{lem:id} with $\Delta\equiv 0$ and obtain
\begin{equation}\label{eq:aprioriupperproof}
  \F_T(\Gamma) - \F_T(\Gamma_0) = \tfrac 12 T\, \H_0(\Gamma,\Gamma_0) + \iint_{\calC\times\R^d} V(h^{-1}(x-y)) |\alpha(x,y)|^2\, {dx \, dy} \,.
\end{equation}
According to Lemma \ref{lem:klein} (with $H^0=\beta H_0$, where $H_0$ is defined in \eqref{eq:hdelta}) we can bound the relative entropy $\H_0(\Gamma,\Gamma_0)$ from below by
$$
T\, \H_0(\Gamma,\Gamma_0) \geq \Trs \left[ \frac {H_0}{\tanh \big(\tfrac \beta 2 H_0\big)} \left( \Gamma - \Gamma_0\right)^2 \right] \,.
$$
The off-diagonal entries of $H_0$ vanish, and its diagonal entries are given by $\mathfrak h$ and $-\overline{\mathfrak h}$ from \eqref{eq:hh} and \eqref{eq:hhbar}, respectively. Hence also the off-diagonal entries of $H_0/\tanh (\frac{\beta}2 H_0)$ vanish and its diagonal entries are given by $\beta K_T^{A,W}$ and $\beta \overline{ K_T^{A,W}}$, where
\begin{equation}\label{def:ktaw}
  K_T^{A,W} = \frac{\mathfrak h}{\tanh(\tfrac\beta 2 \mathfrak h)} =  
   \frac {\left(-i h \nabla + h  A(x) \right)^2 -\mu + h^2 W(x)}{ \tanh\left( \tfrac \beta 2 \left( \left(-i h \nabla + h A(x) \right)^2 -\mu + h^2 W(x)\right) \right)}\,.
\end{equation}
Therefore,
\begin{align}\label{kleincon}
  \Trs \!\left[ \frac {H_0}{\tanh \big(\tfrac \beta 2 H_0\big)}\! \left( \Gamma - \Gamma_0\right)^2 \right] 
& = \Tr K_T^{A,W} \left( (\gamma-\gamma_0)^2 +\alpha\overline\alpha \right) + \Tr \overline{K_T^{A,W}}  \left((\overline\gamma-\overline{\gamma_0})^2 + \overline\alpha \alpha\right) \notag \\
& = 2\, \Tr K_T^{A,W} (\gamma-\gamma_0)^2 + 2 \, \Tr \overline\alpha K_T^{A,W}  \alpha \,.
\end{align}
In the last equality we used the fact that the left side is real-valued. The first term on the right side of \eqref{kleincon} is non-negative and can be dropped for a lower bound. To summarize, we have shown that
\begin{equation}\label{enb}
\F_{T}(\Gamma) - \F_{T}(\Gamma_0) \geq \Tr \overline\alpha K_T^{A,W} \alpha 
+ \iint_{\calC\times\R^d} V(h^{-1}(x-y)) |\alpha(x,y)|^2\, {dx \, dy}\,.
\end{equation}
If we identify the operator $\alpha$ with a two-particle wave function, we can identify the right side of \eqref{enb} with
$$
\int_{\mathcal C} \langle \alpha(\cdot,y)| \left(K_T^{A,W} + V(h^{-1}(\cdot-y)) \right)|\alpha(\cdot,y)\rangle \,dy \,,
$$
where, for every fixed $y\in\mathcal C$, $K_T^{A,W} + V(h^{-1}(\cdot-y))$ acts as a single particle operator in $L^2(\R^d)$. Thus, in order to prove \eqref{eq:aprioriuppergoal}, it remains to show that there is a constant $C$ such that for all $y\in\mathcal C$ and for all sufficiently small $h>0$,
\begin{equation}
\label{eq:upperaprioriproof}
K_T^{A,W} + V(h^{-1}(\cdot-y)) > 0
\qquad\text{for all}\ T > T_c(1 + C h^2) \,.
\end{equation}
(In fact, if we have shown this, we can conclude that $\mathcal F_T(\Gamma)\leq \mathcal F_T(\Gamma_0)$ implies $\alpha\equiv 0$. Since $\Gamma_0$ is the \emph{unique} minimizer of $\mathcal F_T$ among admissible states with vanishing off-diagonal entries, we conclude that either $\mathcal F_T(\Gamma)> \mathcal F_T(\Gamma_0)$ or else $\Gamma=\Gamma_0$.)

Recall that, by definition of $T_c$ and by scaling and translation invariance, we have $K_{T_c}^{0,0} + V(h^{-1}(\cdot-y)) \geq 0$. It was shown in the proof of \cite[Lemma 2]{FHSS}  that
\begin{equation}
\label{eq:upperaprioriproof2}
K_T^{A,W} + V(h^{-1}(\cdot-y)) \geq \frac 18\left(K^{0,0}_T + V(h^{-1}(\cdot-y))\right) - C' h^2 \,,
\end{equation}
for all $T\geq T_c$ with a constant $C'$ depending only on $\|W\|_\infty,$  and $\|A\|_{C^1}$. (The statement of \cite[Lemma 2]{FHSS} says that the constant depends on $h^{-2}(T-T_c)$, but the proof shows that it actually only depends on a \emph{lower} bound on $h^{-2}(T-T_c)$ through \cite[Eq. (5.22)]{FHSS}.)

As we have already discussed in the remarks following Proposition \ref{crittemplimit}, the eigenvalue zero of $K_{T_c}(-i\nabla)+V$ in $L^2_\symm(\R^d)$ has finite multiplicity and is isolated in the spectrum of this operator. Since $T \mapsto K_T(p)$ is an increasing function with non-vanishing derivative for each $p\in\R^d$, analytic perturbation theory implies that
$$
K_T(-i\nabla) + V \geq c (T-T_c)
$$
for all $T\in [T_c,T']$ and some $c>0$ and some $T'>T_c$. Thus, we can bound
$$
K^{0,0}_T + V(h^{-1}(\cdot-y)) \geq K^{0,0}_{\min\{T,T'\}} + V(h^{-1}(\cdot-y)) = c \left(\min\{T,T'\} - T_c\right) \,.
$$
This together with \eqref{eq:upperaprioriproof2} yields \eqref{eq:upperaprioriproof} and completes the proof.
\end{proof}

%%%%%%%%%%%%%%%%%%%%%%%%%%%%%%%%%%%%%%%%%%%%

\section{Proof of the main result. Lower bound on $\underline{T_c(h)}$}\label{sec:mainlower}

Throughout this section we work under Assumption \ref{ass1} and assume that $T_c>0$. We shall show that there are constants $C>0$ and $T_0\in (0,T_c)$ such that, for all sufficiently small $h>0$,
\begin{equation}
\label{eq:lowergoal}
\inf_\Gamma \mathcal F_{T,h}(\Gamma) < F^{(0)}_{T,h}
\qquad\text{for all}\ T_0 \leq T < T_c \left( 1- h^2 \left( D_c + C h\right) \right) \,.
\end{equation}
Since Proposition \ref{apriorilower} takes care of the remaining range $0\leq T<T_0$, this will prove
$$
\frac{\underline{T_c(h)}-T_c}{h^2 T_c} \geq -D_c - C h \,,
$$
which yields one of the two bounds in Theorem \ref{main}.

In order to prove \eqref{eq:lowergoal} we construct an admissible trial state $\Gamma_\Delta$ of the form \eqref{eq:gammadelta} with $H_\Delta$ of the form \eqref{eq:hdelta} and $\Delta$ of the form \eqref{eq:Delta}. Concerning the function $\psi$ entering the definition \eqref{eq:Delta} we assume at this point only that $\psi\in H^2_\per(\R^d)$. 
%The function $\psi$ may depend on $T$, but we assume that its $H^2(\calC)$-norm is bounded from above uniformly in $T\in [T_c/2,2T_c]$.

We apply Lemma \ref{lem:id} with $\Gamma=\Gamma_\Delta$ and obtain (see \eqref{fi2})
\begin{align}\nonumber
\F_T(\Gamma_\Delta) - \F_T(\Gamma_0)
= & - \frac 1{2\beta} \Trs\left[ \ln(1+e^{-\beta
      H_\Delta})-\ln(1+e^{-\beta H_0})\right] \\ \nonumber
  & - \iint_{\calC\times \R^d} V(h^{-1}(x-y))|\alpha_{\rm GL}^{(\psi)}(x, y)|^2\, dx\,dy \\
  & + \iint_{\calC\times\R^d} V(h^{-1}(x-y))\left|
  \alpha_{\rm GL}^{(\psi)}(x, y)-
    \alpha_\Delta(x,y)\right|^2\,{dx\,dy}\,. \label{equ:diff}
\end{align}
Here we use the notation $\alpha_{\rm GL}^{(\psi)}$ from \eqref{eq:alphagl}. We now discuss the three terms on the right side separately. As we will see, the first two terms are main terms and the third one is a remainder term.

Let us begin with the first term. We know from Theorem \ref{thm:scl} that
\begin{align*}
- \frac 1{2\beta} \Trs\left[ \ln(1+e^{-\beta H_\Delta})-\ln(1+e^{-\beta H_0})\right]
 = & \frac{h^{-d+2}}{2} E_1(\beta) + \frac{h^{-d+4}}{2} E_2(\beta) \\
 & + O(h^{-d+5}) \|\psi\|_{H^2(\calC)}^2 \,,
\end{align*}
where we use the same notation as in that theorem but make the dependence of the coefficients on $\beta$ explicit. The above asymptotics are uniform in $T\in [T_c/2,2T_c]$.
   
For the second term on the right side of \eqref{equ:diff} we use the bounds from \cite[(4.10)-(4.13)]{FHSS}. Using the equation for $\alpha_*$ we obtain
\begin{align*}
-\iint_{\calC\times \R^d} V(\tfrac {x-y}h)|\alpha_{\rm GL}^{(\psi)}(x, y)|^2\, dx\,dy
= & -\frac{h^{-d+2}}2 E_1(\beta_c) + \frac{h^{-d+4}}2 E_{2,1}(\beta_c) \\
& + O(h^{-d+6}) \|\psi\|_{H^2(\calC)}^2 \,,
\end{align*}
where $E_{2,1}(\beta_c)$ denotes the first term on the right side of \eqref{def:e2} (including the minus sign). (Note that \cite[(4.10)-(4.13)]{FHSS} is independent of $T$.)

For the third term on the right side of \eqref{equ:diff} we use the bounds from \cite[(4.14)-(4.18)]{FHSS} and obtain
\begin{align*}
\iint_{\calC\times\R^d} V(\tfrac{x-y}h)\left| \alpha_{\rm GL}^{(\psi)}(x, y)- \alpha_\Delta(x,y)\right|^2\,{dx\,dy} = O \left(h^{-d+6} + h^{-d+2} (T-T_c)^2 \right) \|\psi\|_{H^2(\calC)}^2\, .
\end{align*}

As a first step towards the proof of \eqref{eq:lowergoal} let us show that there is a $T_0\in (0,T_c)$ and an $C'>0$ such that
\begin{equation}
\label{eq:lowergoal1}
\inf_\Gamma \mathcal F_{T,h}(\Gamma) < F^{(0)}_{T,h}
\qquad\text{for all}\ T_0 \leq T < T_c \left( 1- C' h^2 \right) \,.
\end{equation}
From the above discussion we recall that we have
\begin{align}
\label{eq:lowerproof}
\F_T(\Gamma_\Delta) - \F_T(\Gamma_0) = \frac{h^{-d+2}}{2} \left( E_1(\beta)-E_1(\beta_c)\right) + O\left(h^{-d+4} + h^{-d+2}(T-T_c)^2\right) \|\psi\|_{H^2(\calC)}^2 \,.
\end{align}
Since the derivative of $\tanh$ is strictly positive, we have
\begin{align*}
& E_1(\beta)-E_1(\beta_c) \\
& \qquad = -\frac{1}{2}\|\psi\|^2 \int_{\R^d} t_*(p)^2 \left( \frac{\tanh(\beta(p^2-\mu)/2)}{p^2-\mu} - \frac{\tanh(\beta_c(p^2-\mu)/2)}{p^2-\mu} \right) \frac{dp}{(2\pi)^d} \\
& \qquad \leq c (T-T_c)\|\psi\|^2
\end{align*}
for some $c>0$ and all $T\leq 2T_c$, say. This, together with \eqref{eq:lowerproof}, implies the existence of constants $T_0$ and $C'$ such that \eqref{eq:lowergoal1} holds.

Thus, it remains to prove
\begin{equation}
\label{eq:lowergoal2}
\inf_\Gamma \mathcal F_{T,h}(\Gamma) < F^{(0)}_{T,h}
\qquad\text{for all}\ T_c \left( 1-C' h^2\right) \leq T < T_c \left( 1- h^2 \left( D_c + Ch\right) \right) \,.
\end{equation}
The proof of this is essentially already contained in \cite{FHSS} and we only sketch the main steps. We set $D= (T_c-T)/(T_c h^2)$, which we may assume to lie in the range $[D_c,C']$. We can expand
\begin{equation}
\label{eq:expande1}
E_1(\beta) = E_1(\beta_c) + \frac{d E_1}{d\beta}(\beta_c) (\beta-\beta_c) + O(h^4)
= E_1(\beta_c) + \beta_c D h^2 \frac{dE_1}{d\beta}(\beta_c) + O(h^4)
\end{equation}
and
\begin{equation}
\label{eq:expande2}
E_2(\beta) = E_2(\beta_c) + O(h^2) \,.
\end{equation}
Noting that
\begin{equation}
\label{eq:assemblegl}
\mathcal E_D(\psi) = \frac{1}{2} \beta_c D h^2 \frac{dE_1}{d\beta}(\beta_c) + \frac 12 E_2(\beta_c) + \frac{1}{2} E_{2,1}(\beta_c) \,,
\end{equation}
we obtain
$$
\F_T(\Gamma_\Delta) - \F_T(\Gamma_0) = h^{-d+4} \mathcal E_D(\psi) + O(h^{-d+5}) \|\psi\|_{H^2(\calC)}^2 \,.
$$
We know from Lemma \ref{dc} that $\mathcal E_D(\psi)$ can be made negative for $D>D_c$. Thus, in order to finish the proof, we need to make sure that this term can be made so negative that it compensates the remainder term $O(h^{-d+5}) \|\psi\|_{H^2(\calC)}^2$.

Let $\psi_*$ be a normalized eigenfunction of $\left(-i\nabla +2A\right)^* \Lambda_0 \left(-i\nabla + 2A\right) + \Lambda_1 W$ with periodic boundary conditions in $L^2(\mathcal C)$ corresponding to its eigenvalue $\Lambda_2 D_c$. It easily follows from Assumption \ref{ass1} that $\psi_*\in H^2_\per(\R^d)$. We choose $\psi=\theta\psi_*$ with $\theta\in\R$ so that $\mathcal E_D(\theta\psi_*)$ is minimal. More explicitly, we compute (recall that $D\geq D_c$)
$$
\inf_{\theta\in\R} \mathcal E_D(\theta\psi_*) = \inf_{\theta\in\R} \left( \theta^2 \Lambda_2 (D_c-D) + \theta^4 \Lambda_3 \|\psi_*\|_4^4 \right) = - \frac{\Lambda_2^2(D-D_c)^2}{2\Lambda_3 \|\psi_*\|_4^4} \,,
$$
where the infimum is achieved for $\theta^2 = \Lambda_2 (D-D_c)/(2\Lambda_3 \|\psi_*\|_4^4)$. With this choice of $\psi$ we obtain
$$
\F_T(\Gamma_\Delta) - \F_T(\Gamma_0) = - h^{-d+4} \left( \frac{\Lambda_2^2(D-D_c)^2}{2\Lambda_3 \|\psi_*\|_4^4} - O( h (D-D_c)) \right) \,.
$$
The right side is negative if $D-D_c> C h$ for some $C>0$, proving \eqref{eq:lowergoal2}. This completes the proof of \eqref{eq:lowergoal}.
\qed

\begin{remark}
We emphasize that only Assumption \ref{ass1} was used in the above lower bound on $\underline{T_c(h)}$. In general, if Assumption \ref{ass2} does not hold and zero is a degenerate eigenvalue of $K_{T_c}(-i\nabla)+V$, any choice of eigenfunction leads to a (possibly different) definition of $D_c$ (which depends on the choice of the eigenfunction through the function $t_*$), and our proof shows that the lower bound on $\underline{T_c(h)}$ holds with any such definition. The non-degeneracy of the zero eigenvalue of $K_{T_c}(-i\nabla)+V$ will only enter in the proof of the upper bound on $\overline{T_c(h)}$.
\end{remark}

%%%%%%%%%%%%%%%%%%%%%%%%%%%%%%%%%%%%%%%%%%%%

\section{Proof of the main result. Upper bound on $\overline{T_c(h)}$}\label{sec:mainupper}

Throughout this section, we work under Assumptions \ref{ass1} and \ref{ass2}.

\subsection{Decomposition of $\alpha$}

The next proposition shows that any $\Gamma$ with free energy below that of the normal state has a canonical form, up to a small remainder.

\begin{proposition}[Decomposition lemma]\label{alphadecomp}
Let $T=T_c(1-D h^2)$ for some $D\in\R$ and let $\Gamma$ be an admissible state with $\mathcal F_{T,h}(\Gamma) \leq F_{T,h}^{(0)}$. Then $\alpha=\Gamma_{12}$ can be decomposed as
\begin{equation}
\label{eq:alphadecomp}
\alpha = \frac h2 \left( \psi(x) \widehat{\alpha_*}(-ih\nabla) +\widehat{\alpha_*}(-ih\nabla) \psi(x)\right) + \xi \,, 
\end{equation}
where
\begin{equation}
\label{eq:psi}
\|\nabla\psi\| \lesssim \|\psi\| \lesssim  1
\end{equation}
and
\begin{equation}
\label{eq:xi}
\|\xi\|_{H^1} \lesssim h^{2-d/2} \|\psi\|_{H^1(\calC)} \,.
\end{equation}
The implied constants are uniform for $D$ in a compact interval.
\end{proposition}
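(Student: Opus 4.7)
The strategy is to use the fact that, under Assumption~\ref{ass2}, the operator $K_{T_c}(-i\nabla)+V$ has a simple zero eigenvalue at $\alpha_*$ with a spectral gap. From the hypothesis $\mathcal F_{T,h}(\Gamma) \le F_{T,h}^{(0)}$ I would derive a quadratic-form inequality for $\alpha$ that (i) furnishes a priori $H^1$-control and (ii) forces $\alpha$, in the relative coordinate $r=x-y$, to concentrate on the $\alpha_*$-mode. The function $\psi$ in \eqref{eq:alphadecomp} is then defined as the natural $\alpha_*$-projection coefficient, and $\xi$ is the remainder, controlled via the spectral gap.

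\textbf{Klein-type a priori bound.} Apply Lemma~\ref{lem:id} with $\Delta = 0$, which converts the hypothesis into
$$\tfrac T 2 \H_0(\Gamma, \Gamma_0) + \iint_{\mathcal C \times \R^d} V(h^{-1}(x-y))|\alpha|^2\,dx\,dy \le 0.$$
Klein's inequality (Lemma~\ref{lem:klein}), processed via \eqref{kleincon} exactly as in the proof of Proposition~\ref{aprioriupper}, then yields
$$\Tr \bar\alpha K_T^{A,W}\alpha + \iint V(h^{-1}(x-y))|\alpha|^2\,dx\,dy + \tfrac{2T}{3}\Tr[\Gamma(1-\Gamma) - \Gamma_0(1-\Gamma_0)]^2 \le 0.$$
The kinetic part $(-ih\nabla+hA)^2$ of $K_T^{A,W}$, together with the quartic Klein term, provides $H^1$- and $L^4$-control of $\alpha$ of the expected order in $h$.

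\textbf{Projection and gap argument.} Define $\psi$, up to a normalization constant, as the $\alpha_*$-overlap of $\alpha$ in the relative variable,
$$\psi(X) \;\propto\; \int_{\R^d} \alpha_*(r/h)\, \alpha\!\bigl(X + \tfrac{r}{2}, X - \tfrac{r}{2}\bigr)\,dr,$$
and let $\xi = \alpha - \alpha_{\rm GL}^{(\psi)}$. The symmetric form $\tfrac12(\psi(x)+\psi(y))$ in \eqref{eq:alphagl} differs from $\psi((x+y)/2)$ by a Taylor-expansion term of size $O(|x-y|^2\nabla^2\psi)$, which on the support of $\alpha_*((x-y)/h)$ yields an $O(h^2)$ contribution to $\xi$ absorbed into the stated bound. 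For the genuine part of $\xi$ orthogonal to $\alpha_*$ in the relative coordinate, the quadratic-form bound above --- after reducing to the translation-invariant operator $K_T^{0,0}$ via \cite[Lemma~2]{FHSS} (as in the proof of Proposition~\ref{aprioriupper}) and using the monotonicity $K_T\ge K_{T_c} - Ch^2$ --- combines with the spectral gap of $K_{T_c}(-i\nabla)+V$ on $\alpha_*^\perp$ to produce $\|\xi\|_{H^1}\lesssim h^{2-d/2}\|\psi\|_{H^1(\mathcal C)}$. The $H^1$-upgrade uses the a priori $H^1$-control of $\alpha$ and a Hardy-type estimate for $V$ under Assumption~\ref{ass1}. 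The bounds $\|\psi\|\lesssim 1$ and $\|\nabla\psi\|\lesssim\|\psi\|$ then follow from the projection definition and the a priori bounds on $\|\alpha\|_2$ and $\|\alpha\|_{H^1}$, using that $\mathcal C$ is compact.

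\textbf{Main obstacle.} The principal difficulty lies in quantifying the spectral gap in the two-body quadratic form: the one-body operator $K_{T_c}(-i\nabla_r)+V(r/h)$ does \emph{not} inherit the gap of $K_{T_c}(-i\nabla)+V$ under the rescaling $r\mapsto r/h$, since $K_{T_c}$ is not a pure Laplacian and does not rescale covariantly. The correct framework is the two-body semiclassical analysis in the spirit of Theorems~\ref{thm:scl}--\ref{lem3}, with the full $(X,r)$-decomposition and careful tracking of error terms, relying on the regularity assumptions in Assumption~\ref{ass1} to compare $K_T^{A,W}$ with $K_{T_c}^{0,0}$. A secondary subtlety is the $H^1$-propagation of the $L^2$-gap estimate: since $\|\alpha_{\rm GL}^{(\psi)}\|_{H^1}\eqsim h^{1-d/2}\|\psi\|_{H^1(\mathcal C)}$ by the remark following Theorem~\ref{lem3}, the required bound on $\|\xi\|_{H^1}$ is smaller by a full power of $h$ than the leading piece, leaving little slack in the estimates.
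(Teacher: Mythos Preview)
Your overall strategy---Klein's inequality from Lemma~\ref{lem:klein} to obtain a quadratic-form bound on $\alpha$, then the spectral gap of $K_{T_c}(-i\nabla)+V$ above $\alpha_*$ to control the orthogonal remainder---is exactly the route the paper takes (deferring to \cite[Sec.~5]{FHSS}). However, your choice of center-of-mass coordinates creates the very obstacle you flag at the end, and you do not actually resolve it; this is the gap.

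The paper defines $\psi$ \emph{asymmetrically},
\[
\psi(y)=(2\pi)^{d/2}h^{-1}\int_{\R^d}\alpha_*\bigl(h^{-1}(x-y)\bigr)\,\alpha(x,y)\,dx,
\]
using $y$ rather than $X=(x+y)/2$ as the base point, and first works with the intermediate remainder $\xi_0(x,y)=\alpha(x,y)-(2\pi)^{-d/2}h^{1-d}\psi(y)\,\alpha_*(h^{-1}(x-y))$. The reason this choice is the right one is that the quadratic form coming out of Klein's inequality has $K_T^{A,W}$ acting on the $x$-variable \emph{only} (see \eqref{enb} and the line below it, where the operator is $K_T^{A,W}+V(h^{-1}(\cdot-y))$ in $L^2(\R^d_x)$ for each fixed $y$). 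For fixed $y$, the substitution $x=y+hr$ turns $K_{T_c}(-ih\nabla_x)+V(h^{-1}(x-y))$ into exactly $K_{T_c}(-i\nabla_r)+V(r)$, the operator with the gap---so the rescaling \emph{is} covariant once one freezes $y$. Your worry that ``$K_{T_c}$ does not rescale covariantly'' is an artifact of the center-of-mass change of variables, which mixes $\nabla_X$ and $\nabla_r$ inside the nonpolynomial symbol $K_T$ via $\nabla_x=\tfrac12\nabla_X+\nabla_r$. No two-body semiclassics is needed here; the $y$-based projection sidesteps the issue entirely.

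A second, smaller gap: your conversion from the projected form to the symmetric form $\tfrac12(\psi(x)+\psi(y))$ invokes a Taylor remainder of size $O(|x-y|^2\nabla^2\psi)$, but at this stage $\psi$ is only known to lie in $H^1$, not $H^2$. In the paper's framework the conversion from $\xi_0$ to $\xi$ involves the difference $\tfrac12(\psi(x)-\psi(y))$, which is controlled by $\|\nabla\psi\|$ alone (cf.\ \cite[(5.64)--(5.65)]{FHSS}); this is why the asymmetric intermediate step matters.
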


To appreciate the bound on $\xi$ one should note that
\begin{equation}
\label{eq:h1leading}
\left\| \frac h2 \left( \psi(x) \widehat{\alpha_*}(-ih\nabla) +\widehat{\alpha_*}(-ih\nabla) \psi(x)\right) \right\|_{H^1} \lesssim h^{1-d/2} \|\psi\|_{H^1(\calC)} \,.
\end{equation}

\begin{proof}
The proof of Proposition \ref{alphadecomp} is essentially contained in \cite[Sec. 5]{FHSS}, although not all bounds (in particular, their dependence on $\psi$) are stated explicitly. We only sketch the additional details. Recall that $\psi$ was defined \cite[(5.36)]{FHSS} by
$$
\psi(y)= (2\pi)^{d/2} h^{-1} \int_{\R^d} \alpha_*(h^{-1}(x-y))\alpha(x,y)\,dx \,.
$$
The first and second bound in \eqref{eq:psi} are discussed in the paragraph after the proof of \cite[Lemma 3]{FHSS} and in the paragraph after the proof of \cite[Lemma 4]{FHSS}, respectively.

The definition of $\psi$ defines $\xi$ by \eqref{eq:alphadecomp} and as in \cite[(5.37)]{FHSS} we also let
\begin{equation}
\label{eq:xi0def}
\xi_0(x,y) = \alpha(x,y) - \frac{h^{1-d}}{(2\pi)^{d/2}}\,\psi(y)\,\alpha_*(h^{-1}(x-y)) \,.
\end{equation}
Then, using some a-priori bounds, we deduced that
\begin{equation}
\label{eq:alphadecompproof}
\|\xi\|_2\lesssim h \|\alpha\|_2 \,,
\qquad
\|\xi_0\|_2\lesssim h \|\alpha\|_2 \,,
\end{equation}
see the remarks after \cite[(5.39)]{FHSS} and after \cite[(5.38)]{FHSS}. Since (see \cite[(5.42)]{FHSS})
\begin{equation}
\label{eq:alphadecompproof2}
\|\alpha\|_2 \lesssim h^{1-d/2} \|\psi\| \,,
\end{equation}
we obtain the bounds
\begin{equation}
\label{eq:alphadecompproof3}
\|\xi\|_2 \lesssim h^{2-d/2}\|\psi\|^2 \,,
\qquad
\|\xi_0\|_2 \lesssim h^{2-d/2}\|\psi\|^2 \,.
\end{equation}

It remains to prove $\|\nabla\xi\|_2 \lesssim h^{1-d/2}\|\psi\|_{H^1}$. (Recall that our definition of the $H^1$-norm involves $-ih\nabla$, not only $-i\nabla$.) We shall prove this first with $\xi_0$, defined in \eqref{eq:xi0def}, in place of $\xi$. Combining \eqref{eq:alphadecompproof2} and the second bound in \eqref{eq:alphadecompproof3} with \cite[(5.63)]{FHSS} yields $\|\nabla\xi_0\|_2 \lesssim h^{1-d/2}\|\psi\|^2$. To deal with $\xi$, it suffices to note that on the right sides of \cite[(5.64) and (5.65)]{FHSS} one can replace $O(h^{2-d})$ by $O(h^{2-d})\|\nabla\psi\|^2$. This completes the proof of \eqref{eq:xi}.
\end{proof}

As in \cite{FHSS}, in order to proceed we need a modification of the decomposition in Lemma~\ref{alphadecomp}, depending on a parameter $\epsilon$, which we will assume to satisfy
\begin{equation}
\label{eq:epsilonh}
h \leq \epsilon \leq 1 \,.
\end{equation}
Let $\theta$ be the Heaviside function, that is, $\theta(t)=1$ for $t\geq 0$ and $0$ otherwise.\footnote{We herewith correct a typo in \cite{FHSS} in the line after (6.1).} We define $\psi_<$ by
$$
\widehat{\psi_<}(p) = \widehat{\psi}(p) \, \theta(\epsilon h^{-1} -|p|)
$$
and $\psi_>=\psi-\psi_<$. It follows from \eqref{eq:psi} that
\begin{equation}
\label{eq:psilargesmallh1}
\| \psi_< \|_{H^1(\calC)} + \|\psi_>\|_{H^1(\calC)} \leq 2\|\psi\|_{H^1(\calC)} \lesssim 1 \,.
\end{equation}
The reason for introducing $\epsilon$ is that $\psi_<\in H^2_\per(\R^d)$ with
\begin{equation}
\label{eq:psih2}
\|\psi_<\|_{H^2(\calC)} \lesssim \epsilon h^{-1} \|\psi_<\|_{H^1(\calC)} \lesssim \epsilon h^{-1} \,.
\end{equation}
Also, for later purposes, we note that, by \eqref{eq:psi},
\begin{equation}
\label{eq:psilarge}
\|\psi_>\| \leq \epsilon^{-1} h \|\nabla\psi\| \lesssim \epsilon^{-1} h \|\psi\| \,,
\end{equation}
which implies that
\begin{equation}
\label{eq:psismallall}
\|\psi_<\|^2 \geq (1- C \epsilon^{-2} h^2) \|\psi\|^2 \,. 
\end{equation}
With $\xi$ from \eqref{eq:alphadecomp} we define
\begin{equation}
\label{eq:sigmadef}
\sigma = \frac h2 \left( \psi_>(x) \widehat{\alpha_*}(-ih\nabla) +\widehat{\alpha_*}(-ih\nabla) \psi_>(x)\right) + \xi \,,
\end{equation}
so that \eqref{eq:alphadecomp} becomes
\begin{equation}
\label{eq:alphadecompmod}
\alpha= \frac h2 \left( \psi_<(x) \widehat{\alpha_*}(-ih\nabla) +\widehat{\alpha_*}(-ih\nabla) \psi_<(x)\right) + \sigma \,.
\end{equation}
It follows from \eqref{eq:xi} and a computation analogous to \eqref{eq:h1leading} using \eqref{eq:psilarge} that\footnote{This argument simplifies the analysis in  \cite[Section 6]{FHSS}, leading to the same conclusion.}
\begin{equation}
\label{eq:sigma}
\|\sigma \|_{H^1} \lesssim \epsilon^{-1} h^{2-d/2} \|\psi\|_{H^1(\calC)} \,.
\end{equation}

%%%%%%%%%%%%%%%%%%%%%%%%%%%%%%%%%%%%%%%%%%%%%%%%

\subsection{Comparison with $\Gamma_\Delta$}

We now begin with the proof of the upper bound on $\overline{T_c(h)}$ asserted in Theorem \ref{main}. We shall show that for any given constant $C'<D_c$ there is a constant $C>0$ such that, for all sufficiently small $h>0$,
\begin{equation}
\label{eq:uppergoal}
\mathcal F_{T,h}(\Gamma) > F_{T,h}^{(0)}
\qquad\text{if}\quad T_c\left(1-h^2 \left(D_c - C \mathcal R \right)\right) < T\leq T_c \left( 1- h^2 C' \right)
\quad\text{and}\quad \Gamma\neq\Gamma_0 \,,
\end{equation}
where $\mathcal R$ was defined in \eqref{eq:r}. Since Proposition \ref{aprioriupper} takes care of the remaining range $T> T_c \left( 1- h^2 C' \right)$, this will prove
$$
\frac{\overline{T_c(h)}-T_c}{h^2 T_c} \leq - D_c + C \mathcal R \,,
$$
which is the remaining bound in Theorem \ref{main}.

For the proof of \eqref{eq:uppergoal} we shall show that there is a constant $C$ such that if for some admissible $\Gamma$ and some $T_c \left( 1- h^2 (D_c-C\mathcal R) \right) < T \leq T_c \left( 1- h^2 C' \right)$ we have $\mathcal F_{T,h}(\Gamma) \leq F_{T,h}^{(0)}$, then $\Gamma=\Gamma_0$. Clearly, to prove this we may assume that
\begin{equation}
\label{eq:uppertrange}
T_c \left( 1- h^2 D_c \right) \leq T \leq T_c \left( 1- h^2 C' \right) \,.
\end{equation}
Let $\Gamma$ be admissible with $\mathcal F_{T,h}(\Gamma) \leq F_{T,h}^{(0)}$. Then by Proposition \ref{alphadecomp} and the discussion following this proposition we obtain the decomposition \eqref{eq:alphadecompmod} of $\alpha=\Gamma_{12}$ for every $h\leq\epsilon\leq 1$.

With $t_*$ introduced in \eqref{eq:deft} let us set
\begin{equation}
\label{eq:deltadef}
\Delta = -\frac{h}{2} \left( \psi_<(x) t_*(-ih\nabla) + t_*(-ih\nabla) \psi_<(x) \right) \,.
\end{equation}
This defines $H_\Delta$ by \eqref{eq:hdelta} and $\Gamma_\Delta$ by \eqref{eq:gammadelta}. The intuition of the proof is that the free energy in the state $\Gamma$ is close to that in the state $\Gamma_\Delta$. Since $\Gamma_\Delta$ has the form required for our semi-classical theorems, we can use them to compute its free energy. Thus, we will get a good approximation to the free energy of $\Gamma$ itself.

Let $\alpha_\Delta = (\Gamma_\Delta)_{12}$. Then, by Theorem \ref{lem3} and the equation defining $\alpha_*$,
\begin{equation}
\label{eq:alphadeltadecomp}
\alpha_\Delta = \frac h2 \left( \psi_<(x) \widehat{\alpha_*}(-ih\nabla) +\widehat{\alpha_*}(-ih\nabla) \psi_<(x)\right) + \phi
\end{equation}
with
\begin{equation}
\label{eq:alphadeltadecompbound}
\|\phi\|_{H^1} \lesssim h^{3-d/2} \left( \|\psi_<\|_{H^2(\calC)} + \|\psi_<\|_{H^1(\calC)}^3 \right) \lesssim \epsilon h^{2-d/2} \|\psi\|_{H^1(\calC)} \,.
\end{equation}
The last inequality used \eqref{eq:psilargesmallh1} and \eqref{eq:psih2}. Decomposition \eqref{eq:alphadeltadecomp} for $\alpha_\Delta$ should be compared with decomposition \eqref{eq:alphadecompmod} for $\alpha$.

We now use the key identity \eqref{fi} (with the $\Gamma_\Delta$ that we just defined) to obtain 
\begin{align}
& \mathcal F_T(\Gamma) - \mathcal F_T(\Gamma_0) \nonumber \\ 
& = -\frac T 2 \left[ \ln\left(1+e^{-\beta H_\Delta}\right)-\ln\left(1+e^{-\beta H_0}\right)\right]
\nonumber \\
& \quad - h^{2-2d}\iint_{\calC\times \R^d} V(h^{-1}(x-y)) \tfrac 14 \left| \psi_<(x)+\psi_<(y)\right|^2 |\alpha_*(h^{-1}(x-y))|^2\, \frac{dx\,dy}{(2\pi)^d} \nonumber \\
& \quad +  \tfrac 12 T \, \H_0(\Gamma,\Gamma_\Delta) + \iint_{\calC\times \R^d} V(h^{-1}(x-y))|\sigma(x,y)|^2\, {dx\,dy}\,. \label{off}
\end{align}
It follows from Theorem \ref{thm:scl} that
\begin{align*}
-\frac T 2 \left[ \ln\left(1+e^{-\beta H_\Delta}\right)-\ln\left(1+e^{-\beta H_0}\right)\right]
= & \frac{h^{2-d}}{2} E_1(\beta) + \frac{h^{4-d}}{2} E_2(\beta) \\
& + O(h^{5-d} + h^{4-d}\epsilon^2)\|\psi_<\|^2_{H^1(\calC)} \,,
\end{align*}
where we use the same notation as in the proof of the lower bound on $\underline{T_c(h)}$. We also used \eqref{eq:psilargesmallh1} and \eqref{eq:psih2} for the terms of orders $h^{5-d}$ and $h^{6-d}$ in Theorem \ref{thm:scl}.

We now proceed as in the proof of the upper bound. That is, using our a-priori bound \eqref{eq:uppertrange} we expand $E_1$ and $E_2$ as in \eqref{eq:expande1} and \eqref{eq:expande2}, as well as
\begin{align*}
& h^{2-2d} \iint_{\calC\times \R^d} V(h^{-1}(x-y)) \tfrac 14 \left| \psi_<(x)+\psi_<(y)\right|^2 |a_*(h^{-1}(x-y))|^2\, \frac{dx\,dy}{(2\pi)^d} \\
& \qquad = \frac{h^{2-d}} 2 E_1(\beta_c) + \frac{h^{4-d}}2 E_{2,1}(\beta_c) + O(h^{4-d} \epsilon^2)\|\psi_<\|_{H^1(\calC)}^2 \,,
\end{align*}
where $E_{2,1}(\beta_c)$ is the first term on the right side of \eqref{def:e2} (including the minus sign). Moreover, we used \eqref{eq:psih2} and \cite[Eq. (4.13)]{FHSS} to bound the remainder term. Thus, the terms of order $h^{2-d}$ on the right side of \eqref{off} cancel and, using \eqref{eq:assemblegl}, we obtain
\begin{align}
\mathcal F_T(\Gamma) - \mathcal F_T(\Gamma_0) 
= & h^{4-d} \mathcal E_D(\psi_<) + O(h^{5-d} + h^{4-d}\epsilon^2)\|\psi_<\|^2_{H^1} \nonumber \\
& +  \tfrac 12 T \, \H_0(\Gamma,\Gamma_\Delta) + \iint_{\calC\times \R^d} V(h^{-1}(x-y))|\sigma(x,y)|^2\, {dx\,dy} \label{off2}
\end{align}
with $D= (T_c-T)/(h^2 T_c)$. For the proof of the lower bound we may drop the non-negative quartic term and obtain
\begin{align*}
\E_D[\psi_<] & \geq \left\langle \psi_<| (-i\nabla +2A)^*\Lambda_0 (-i\nabla + 2A) + \Lambda_1 W -\Lambda_2 D | \psi_< \right\rangle \\
& \geq \Lambda_2 \left( D_c - D \right) \|\psi_<\|^2 \,.
\end{align*}
Recall that $D\leq D_c$. Therefore, \eqref{eq:psi} and \eqref{eq:psismallall} (note that $\epsilon^{-1} h^2 \leq h$) imply that
$$
\E_D[\psi_<] \geq c \left( D_c - D \right) \|\psi\|_{H^1(\calC)}^2
$$
for some $c>0$.

In Lemma \ref{final} below we bound the last two terms on the right side of \eqref{off2} from below. Combining this bound with \eqref{off2} we obtain
\begin{align*}
\mathcal F_T(\Gamma) - \mathcal F_T(\Gamma_0)
\geq h^{4-d} \|\psi\|_{H^1(\calC)}^2 \left( c \left( D_c - D \right) - C \left( \epsilon^{-1} h + \epsilon + \epsilon^{-2} h r \right) \right) \,,
\end{align*}
where
\begin{equation}
\label{eq:rsmall}
r=\begin{cases}
1 & \text{if}\ d=1\,,\\
\sqrt{\ln(\epsilon/h)} & \text{if}\ d=2 \,,\\
h^{1/5} & \text{if}\ d=3 \,.
\end{cases}
\end{equation}
We now choose $\epsilon=h^{1/3}$ if $d=1$, $\epsilon= h^{1/3} (\ln(1/h))^{1/6}$ if $d=2$ and $\epsilon=h^{1/5}$ if $d=3$ and obtain finally
$$
\mathcal F_T(\Gamma) - \mathcal F_T(\Gamma_0)
\geq h^{4-d} \|\psi\|_{H^1(\calC)}^2
\left( c \left( D_c - D \right) - C \mathcal R \right)
$$
with $\mathcal R$ from \eqref{eq:r}. Recall that we assume $\mathcal F_T(\Gamma) - \mathcal F_T(\Gamma_0)\leq 0$. Thus, if $c(D_c-D)>C\mathcal R$, that is, $T> T_c(1-h^2(D_c - (C/c)\mathcal R))$, then necessarily $\psi\equiv 0$. According to \eqref{eq:alphadecomp} and \eqref{eq:xi}, this implies $\alpha\equiv 0$. Since $\Gamma_0$ is the \emph{unique} minimizer of $\mathcal F_T$ among admissible states with vanishing off-diagonal entries, we conclude that $\Gamma=\Gamma_0$. As explained before \eqref{eq:uppertrange}, this proves \eqref{eq:uppergoal}. 

Therefore, to complete the proof of Theorem \ref{main} it remains to prove the following bound.\footnote{This Lemma is essentially the content of \cite[Subsec. 6.2]{FHSS}. However, since we are able to  simplify the argument, we include some details here.}

\begin{lemma}\label{final}
Assume that an admissible $\Gamma$ satisfies $\mathcal F_T(\Gamma)\leq F^{(0)}_T$ and define $\sigma$ and $\Delta$ by \eqref{eq:sigmadef} and \eqref{eq:deltadef}. Then, with $r$ from \eqref{eq:rsmall},
\begin{align*}
& \tfrac 12 T \, \H_0(\Gamma,\Gamma_\Delta) + \iint_{\calC\times \R^d} V(h^{-1}(x-y))|\sigma(x,y)|^2\, {dx\,dy} \\
& \qquad \gtrsim - h^{4-d} \left( \epsilon^{-1} h + \epsilon + \epsilon^{-2} h r \right) \|\psi\|_{H^1(\calC)}^2 \,.
\end{align*}
The constant is uniform for $T$ as in \eqref{eq:uppertrange}.
\end{lemma}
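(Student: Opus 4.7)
\medskip

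\noindent\textbf{Plan of proof of Lemma \ref{final}.}

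The strategy is the one used in \cite[Subsec.~6.2]{FHSS}: reduce the bound to a coercivity estimate for the two-particle operator $K_T^{A,W}+V(h^{-1}(\cdot))$ acting on $\sigma$, and then exploit the spectral gap of $K_{T_c}(-i\nabla)+V$ on $L^2_{\symm}(\R^d)$ (Assumption \ref{ass2}).

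First I would apply Klein's inequality (Lemma~\ref{lem:klein}) with $H^0=\beta H_\Delta$, which yields
$$
\tfrac12 T\,\H_0(\Gamma,\Gamma_\Delta)\;\geq\;\tfrac12\,\Tr_0\!\left[\frac{H_\Delta}{\tanh(\beta H_\Delta/2)}\,(\Gamma-\Gamma_\Delta)^2\right].
$$
The diagonal entry of $\Gamma-\Gamma_\Delta$ contributes a non-negative term which I would keep only to control cross-terms; the off-diagonal entry is exactly $\alpha-\alpha_\Delta=\sigma-\phi$. Since $\|\Delta\|\lesssim h^{1-d/2}\|\psi_<\|_{H^1(\calC)}$ is small, I would use the same resolvent expansion as in \cite[Lemma~2]{FHSS} to replace $H_\Delta/\tanh(\beta H_\Delta/2)$, restricted to the off-diagonal block, by $\tfrac18(\mathrm{diag}(K_T^{A,W},\overline{K_T^{A,W}}))$ up to an error of order $\|\Delta\|^2$, i.e.\ of order $h^{2-d}\|\psi\|_{H^1(\calC)}^2$. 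This produces the lower bound
$$
\tfrac12 T\,\H_0(\Gamma,\Gamma_\Delta)\;\geq\;\tfrac18\,\Tr(\sigma-\phi)^*K_T^{A,W}(\sigma-\phi)\;-\;Ch^{2-d}\|\psi\|_{H^1(\calC)}^2\cdot\text{(smallness)}.
$$

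Second, I would absorb the $\phi$-dependence. Using a Cauchy--Schwarz splitting $\Tr(\sigma-\phi)^*K_T^{A,W}(\sigma-\phi)\geq (1-\delta)\Tr\sigma^*K_T^{A,W}\sigma-\delta^{-1}\Tr\phi^*K_T^{A,W}\phi$ with a suitable $\delta$, and combining with the potential term, I would obtain
$$
\tfrac12 T\,\H_0(\Gamma,\Gamma_\Delta)+\iint V(h^{-1}(x-y))|\sigma|^2 \;\geq\; c\,\bigl\langle\sigma\,\big|\,K_T^{A,W}+V(h^{-1}(\cdot))\big|\,\sigma\bigr\rangle-R,
$$
where $R$ is controlled by $\|\phi\|_{H^1}^2\lesssim \epsilon^2 h^{4-d}\|\psi\|_{H^1(\calC)}^2$ together with the ``$\epsilon^{-1}h$'' error from replacing the operator, and the potential term appears once with a factor $1$ on $\sigma$ and once with a factor small on $\phi$; the cross-term is bounded by Cauchy--Schwarz using $\|\phi\|_{H^1}$ and $\|\sigma\|_{H^1}\lesssim \epsilon^{-1}h^{2-d/2}\|\psi\|_{H^1(\calC)}$ from \eqref{eq:sigma}. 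The two-particle operator $K_T^{A,W}+V(h^{-1}(\cdot))$ is understood as in the proof of Proposition~\ref{aprioriupper}, acting on $\sigma(\cdot,y)$ for each $y\in\calC$.

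Third, I would pass from the magnetic operator $K_T^{A,W}$ to the translation-invariant $K_T^{0,0}(-i\nabla)=K_T(-i\nabla)$ via \eqref{eq:upperaprioriproof2}, paying $O(h^2)$, and then from $T$ to $T_c$ using $|T-T_c|\lesssim h^2$. Thus
$$
\bigl\langle\sigma\,|\,K_T^{A,W}+V(h^{-1}(\cdot))\,|\,\sigma\bigr\rangle\;\geq\;\tfrac18\bigl\langle\sigma\,|\,K_{T_c}(-i\nabla)+V(h^{-1}(\cdot))\,|\,\sigma\bigr\rangle-Ch^2\|\sigma\|_2^2.
$$
By Assumption \ref{ass2}, the operator $K_{T_c}(-i\nabla)+V$ on $L^2_{\symm}(\R^d)$ has $0$ as a simple, isolated eigenvalue with eigenfunction $\alpha_*$, so there is a spectral gap $\kappa>0$: on the orthogonal complement of $\alpha_*$ (in $L^2_{\symm}$), the operator is bounded below by $\kappa$. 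After translating the relative variable by $y$, the same gap controls $\langle\sigma(\cdot,y),(K_{T_c}+V(h^{-1}(\cdot-y)))\sigma(\cdot,y)\rangle$ provided $\sigma(\cdot,y)$ is almost orthogonal to $\alpha_*(h^{-1}(\cdot-y))$.

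Finally---and this is the main obstacle---I would establish that $\sigma$ is essentially orthogonal to the ground state mode. Recall $\sigma=\tfrac h2(\psi_>\widehat{\alpha_*}(-ih\nabla)+\widehat{\alpha_*}(-ih\nabla)\psi_>)+\xi$. By construction of $\psi=\int \alpha_*(h^{-1}(x-y))\alpha(x,y)\,dx$, the leading piece $\tfrac h2(\psi_<\widehat{\alpha_*}+\widehat{\alpha_*}\psi_<)$ already absorbs the full projection onto $\alpha_*$ at low center-of-mass frequencies, so the remainder $\sigma$ is essentially a projection onto frequencies $\gtrsim \epsilon/h$ in the center-of-mass variable. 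On this regime, the Fourier-side symbol of $K_{T_c}(-i\nabla)+V(h^{-1}(\cdot))$ (transformed to center-of-mass/relative coordinates) picks up a contribution of order $1$ from the kinetic part associated with the large momentum, minus a pointwise scalar product of $\widehat{\alpha_*}$ with the high-frequency Fourier tail of itself. This tail is bounded by a dimension-dependent quantity which, after accounting for the $L^{2p/(p-1)}$ norm of $\widehat{\alpha_*}$ appearing in \eqref{t:as1}, yields the factor $r$ in \eqref{eq:rsmall}: $r=1$ in $d=1$ (summable tail), $r=\sqrt{\ln(\epsilon/h)}$ in $d=2$ (logarithmically divergent endpoint), and $r=h^{1/5}$ in $d=3$ (polynomial tail). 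Combined with the $\|\xi\|_{H^1}\lesssim h^{2-d/2}\|\psi\|_{H^1(\calC)}$ bound from \eqref{eq:xi}, this yields
$$
\bigl\langle\sigma\,|\,K_{T_c}(-i\nabla)+V(h^{-1}(\cdot))\,|\,\sigma\bigr\rangle\;\geq\;-C\,\epsilon^{-2}h\,r\,h^{4-d}\|\psi\|_{H^1(\calC)}^2.
$$
Tracking all three error contributions---the $\|\Delta\|^2$ and $\|\phi\|^2$ errors producing $\epsilon^{-1}h+\epsilon$, and the orthogonality defect producing $\epsilon^{-2}hr$---gives the claimed bound. The hard step is quantifying the near-orthogonality of $\sigma$ to $\alpha_*$ with the correct dimension-dependent constant, since this requires careful Fourier analysis of the cutoff at frequency $\epsilon/h$ against the decay of $\widehat{\alpha_*}$ near the Fermi sphere $|p|=\sqrt\mu$.
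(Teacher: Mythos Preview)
Your overall architecture (Klein's inequality, reduce to an off-diagonal quadratic form in $\alpha-\alpha_\Delta=\sigma-\phi$, replace $K_T^{A,W}$ by $K_T(-ih\nabla)$) matches the paper, but two of your three claimed error mechanisms are misplaced, and one of them creates a genuine gap.

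\medskip
\textbf{The cross term.} Your Cauchy--Schwarz split
\[
\Tr(\sigma-\phi)^*K_T^{A,W}(\sigma-\phi)\geq (1-\delta)\Tr\sigma^*K_T^{A,W}\sigma-\delta^{-1}\Tr\phi^*K_T^{A,W}\phi
\]
is too lossy to produce the $\epsilon^{-1}h+\epsilon$ error. With $\|\phi\|_{H^1}^2\lesssim \epsilon^2 h^{4-d}\|\psi\|_{H^1}^2$ and $\|\sigma\|_{H^1}^2\lesssim \epsilon^{-2}h^{4-d}\|\psi\|_{H^1}^2$, optimizing $\delta$ gives only $O(h^{4-d})\|\psi\|_{H^1}^2$ with no smallness factor. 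The paper instead expands $(\sigma-\phi)^*K_T(\sigma-\phi)\geq \sigma^*K_T\sigma-2\Re\,\phi^*K_T\sigma$ and exploits an \emph{exact} Fourier orthogonality in the cross term: writing $\phi=\eta_1+(\phi-\eta_1)$ via the refined decomposition in Theorem~\ref{lem3} and $\sigma=(\sigma-\xi)+\xi$, one has
\[
\Tr\,\overline{\eta_1}\,K_T(-ih\nabla)\,(\sigma-\xi)=0,
\]
because $\eta_1$ is built from $\psi_<$ and $\sigma-\xi$ from $\psi_>$, whose Fourier supports are disjoint, while $K_T(-ih\nabla)$ is a Fourier multiplier. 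The surviving pieces are then bounded by $\|\phi-\eta_1\|_{H^1}\|\sigma\|_{H^1}$ and $\|\phi\|_{H^1}\|\xi\|_{H^1}$, which give exactly $\epsilon^{-1}h+\epsilon$. This cancellation is the heart of the argument and is absent from your plan.

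\medskip
\textbf{The quadratic term and the origin of $r$.} No spectral-gap or near-orthogonality argument for $\sigma$ is needed; the paper simply uses the crude operator bound $(1-\delta)K_T(-ih\nabla)+V\geq -Chr$ (since $K_T+V\gtrsim -h^2$ for $T$ in the range \eqref{eq:uppertrange}, and $K_T+2V\geq -C$) together with $\|\sigma\|_2^2\lesssim \epsilon^{-2}h^{4-d}\|\psi\|_{H^1}^2$. The factor $r$ does \emph{not} come from Fourier tails of $\widehat{\alpha_*}$; it enters one step earlier, in passing from $H_\Delta/\tanh(\beta H_\Delta/2)$ to $K_T(-ih\nabla)$, through the bound $\|\psi_<\|_\infty\lesssim r$ (this is where the dimension-dependent Sobolev behavior produces the cases in \eqref{eq:rsmall}). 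Your attribution of $r$ to the relative-coordinate analysis would not reproduce the stated powers.
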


\begin{proof}
We know from \cite[Eqs. (6.21), (6.22) and (6.23)]{FHSS} that
$$
\tfrac 12 T \, \H_0(\Gamma,\Gamma_\Delta) \geq (1-\delta) \tr (\overline\alpha -\overline{\alpha_\Delta})K_T(-ih\nabla) (\alpha-\alpha_\Delta)
$$
with
\begin{equation}
\label{eq:delta}
\delta \lesssim h r \,.
\end{equation}
The key ingredients in the proof of this inequality are Klein's inequality, a replacement of $K^{A,W}_T$ by $K_T(-ih\nabla)$ and the fact that $\psi_< \in L^\infty(\calC)$ with $\|\psi_<\|_\infty\lesssim r$.

We now recall from \eqref{eq:alphadecompmod} and \eqref{eq:alphadeltadecomp} that $\alpha-\alpha_\Delta=\sigma-\phi$ and that, by the positivity of $K_T(-ih\nabla)$,
\begin{align*}
\tr (\overline\alpha -\overline{\alpha_\Delta})K_T(-ih\nabla)(\alpha-\alpha_\Delta) & = \tr (\overline\sigma-\overline\phi)K_T(-ih\nabla)(\sigma-\phi) \\
& \geq \tr \overline\sigma K_T(-ih\nabla) \sigma - 2 \re \tr \overline\phi K_T(-ih\nabla) \sigma \,.
\end{align*}
Thus, in order to prove the lemma, we shall bound
\begin{equation}
\label{eq:sigmaquadbound}
(1-\delta) \tr \overline\sigma K_T(-ih\nabla) \sigma + \iint_{\calC\times \R^d} V(h^{-1}(x-y))|\sigma(x,y)|^2\, {dx\,dy} \gtrsim - \epsilon^{-2} h^{5-d} r \|\psi\|_{H^1(\calC)}^2
\end{equation}
and
\begin{equation}
\label{eq:sigmalinbound}
-2(1-\delta) \re \tr \overline\phi K_T(-ih\nabla) \sigma \gtrsim - h^{4-d} \left(\epsilon^{-1} h +\epsilon \right) \|\psi\|_{H^1(\calC)}^2 \,.
\end{equation}

For the proof of \eqref{eq:sigmaquadbound} we bound
\begin{align*}
(1-\delta) K_T(-ih\nabla) + V & = (1-2\delta) (K_T(-ih\nabla)+V) + \delta (K_T(-ih\nabla)+ 2V) \\
& \geq -2 T_c (D_c)_+ h^2 - C \delta \gtrsim - hr \,.
\end{align*}
Here we used the fact that $V$ is relatively bounded with respect to $K_T(-ih\nabla)$ to bound $K_T(-ih\nabla)+2V\geq -C$ and we used $K_T(-ih\nabla) \geq K_{T_c}(-ih\nabla) - 2(T_c-T)_+\geq K_{T_c}(-ih\nabla) - 2h^2 T_c (D_c)_+$ for the first one. We also used the lower bound \eqref{eq:uppertrange} on $T$. The last inequality follows from \eqref{eq:epsilonh} and \eqref{eq:delta}. Thus, \eqref{eq:sigmaquadbound} follows from the bound \eqref{eq:sigma} on $\sigma$.

For the proof of \eqref{eq:sigmalinbound} we use the precise decomposition in Theorem \ref{lem3}. According to this we can write $\phi=\eta_1 + (\phi-\eta_1)$, where $\eta_1$ is explicitly given by \eqref{eq:eta1} with $\psi$ replaced by $\psi_<$ and where
\begin{equation}
\label{eq:alphadeltadecompmodbound}
\| \phi - \eta_1 \|_{H^1} \lesssim h^{3-d/2} \left( \|\psi_<\|_{H^1(\calC)} + \|\psi_<\|_{H^1(\calC)}^3 \right) \lesssim h^{3-d/2} \|\psi\|_{H^1(\calC)} \,.
\end{equation}
Note that the latter bound is independent of $\epsilon$ in contrast to the bound \eqref{eq:alphadeltadecompbound} on $\phi$. This should be compared with the decomposition $\sigma = (\sigma-\xi)+\xi$ from Proposition~\ref{alphadecomp}, where again $\xi$ satisfies a better bound \eqref{eq:xi} than $\sigma$ in \eqref{eq:sigma}. The key observation now is that
$$
\tr \overline{\eta_1} K_T(-ih\nabla) (\sigma-\xi) = 0 \,.
$$
This follows from the fact that the supports of $\widehat{\psi_<}$ (which appears in $\eta_1$) and $\widehat{\psi_>}$ (which appears in $\sigma-\xi$) are disjoint using the explicit form of $\eta_1$. We deduce that
\begin{align*}
\re\tr \overline{\phi} K_T(-ih\nabla) \sigma & = \re \tr\left(\overline\phi-\overline{\eta_1}\right) K_T(-ih\nabla) (\sigma-\xi) + \re\tr\overline\phi K_T(-ih\nabla) \xi \\
& \lesssim \|\phi - \eta_1 \|_{H^1} \left( \|\sigma\|_{H^1}+\|\xi\|_{H^1} \right) + \|\phi\|_{H^1} \|\xi\|_{H^1} \\
& \lesssim \left( \epsilon^{-1} h^{5-d} + \epsilon h^{4-d} \right) \|\psi\|_{H^1(\calC)}^2 \,.
\end{align*}
Here we used \eqref{eq:xi}, \eqref{eq:sigma}, \eqref{eq:alphadeltadecompbound} and \eqref{eq:alphadeltadecompmodbound}. This proves \eqref{eq:sigmalinbound}.
\end{proof}

%%%%%%%%%%%%%%%%%%%%%%%%%%%%%%

\subsection*{Acknowledgments} 
The authors are grateful to I. M. Sigal for useful discussions. Financial support from the U.S.~National Science Foundation through grants PHY-1347399 and DMS-1363432 (R.L.F.), from the Danish council for independent research and from ERC Advanced grant 321029 (J.P.S.) is acknowledged. 

%%%%%%%%%%%%%%%%%%%%%%%%%%%%%%%

\bibliographystyle{amsalpha}

\begin{thebibliography}{13}

\bibitem{BCS} J. Bardeen, L. Cooper, J. Schrieffer, {\it Theory of superconductivity}, Phys. Rev. {\bf 108} (1957), 1175--1204.

\bibitem{E} G. Eilenberger, \textit{Ableitung verallgemeinerter Ginzburg--Landau-Gleichungen f\"ur reine Supraleiter aus einem Variationsprinzip}, Z. f. Physik \textbf{182} (1965), no. 4, 427--438.

\bibitem{FHNS} R.L. Frank, C. Hainzl, S. Naboko, R. Seiringer, {\it The critical temperature for the BCS equation at weak coupling}, J. Geom. Anal. {\bf 17} (2007), 559--568.

\bibitem{FHSS} R. L. Frank, C. Hainzl, R. Seiringer, J. P. Solovej, \textit{Microscopic derivation of Ginzburg--Landau theory}. J. Amer. Math. Soc. \textbf{25} (2012), no. 3, 667--713.

\bibitem{FHSS1a} R. L. Frank, C. Hainzl, R. Seiringer, J. P. Solovej, \textit{Derivation of Ginzburg-Landau theory for a one-dimensional system with contact interaction}. In: Operator Methods in Mathematical Physics, J. Janas et al. (eds.), 57--88, Oper. Theory Adv. Appl. \textbf{227}, Birkh\"auser, Basel, 2013. 

\bibitem{FHSS2} R. L. Frank, C. Hainzl, R. Seiringer, J. P. Solovej, \textit{Microscopic derivation of the Ginzburg--Landau model}. In: XVIIth International Congress on Mathematical Physics, Proceedings of the ICMP held in Aalborg, August 6-11, 2012, A. Jensen (ed.), 575--583, World Scientific, Singapore, 2013.

\bibitem{dG} P.G. de Gennes, {\it Superconductivity of metals and alloys}, Westview Press (1966).

\bibitem{GL} V.L. Ginzburg, L.D. Landau, {\it On the theory of superconductivity}, Zh. Eksp. Teor. Fiz. {\bf 20} (1950), 1064--1082.

\bibitem{G} L.P. Gor'kov, {\it Microscopic derivation of the Ginzburg--Landau equations in the theory of superconductivity},  Zh. Eksp. Teor. Fiz. {\bf 36} (1959), 1918--1923; {\it English translation} Soviet Phys. JETP {\bf 9} (1959), 1364--1367.
  
\bibitem{HHSS} C. Hainzl, E. Hamza, R. Seiringer, J. P. Solovej, \textit{The BCS functional for general pair interactions}. Comm. Math. Phys. \textbf{281} (2008), no. 2, 349--367.

\bibitem{HLS} C. Hainzl, M. Lewin, R. Seiringer, {\em A nonlinear
    theory for relativistic electrons at positive temperature},
  Rev. Math. Phys. {\bf 20} (2008), 1283--1307.

\bibitem{HS1} C. Hainzl, R. Seiringer, {\it Critical temperature and energy gap for the BCS equation}, Phys. Rev. B {\bf 77} (2008), 184517-1--10.

\bibitem{HS2} C. Hainzl, R. Seiringer, {\it The BCS critical temperature for potentials with negative scattering length},  Lett. Math. Phys. {\bf 84} (2008), 99--107.

\bibitem{helffer} B. Helffer, D. Robert, {\it Calcul fonctionnel par la 
transformation de Mellin et op{\'e}rateurs admissibles}, J. Funct. Anal. {\bf 53} (1983), 
246--268.

\bibitem{robert} D. Robert, {\it Autour de l'approximation semi-classique}, 
Progress in Mathematics {\bf  68} (1987), Birkh\"auser.

\end{thebibliography}

\end{document}